\newcommand{\LE}{Leader Election\xspace}
\newcommand{\eLE}{Revocable Leader Election\xspace}
\newcommand{\sLE}{Irrevocable Leader Election\xspace}
\newcommand{\eLEname}{Blind \LE with Certificates via Diffusion with Thresholds\xspace}
\newcommand{\congest}{{\sc Congest}\xspace}
\newcommand{\CMC}{{\sc Avg}\xspace} 
\newcommand{\ANs}{Anonymous Networks\xspace}
\newcommand{\dk}[1]{{\color{red} #1}}
\newcommand{\mm}[1]{\textcolor{blue}{#1}}
\renewcommand{\mm}[1]{#1}
\renewcommand{\dk}[1]{#1}
\newcommand{\remove}[1]{}
\newcommand{\ldr}{{leader}\xspace}
\newcommand{\cand}{{white}\xspace}
\newcommand{\notcand}{{black}\xspace}
\newcommand{\dist}{{diffusion}\xspace}
\newcommand{\Dist}{{Diffusion}\xspace}
\newcommand{\diss}{{dissemination}\xspace}
\newcommand{\Diss}{{Dissemination}\xspace}
\newcommand{\try}{{certification}\xspace}
\newcommand{\Try}{{Certification}\xspace}
\newcommand{\dec}{{decision}\xspace}
\newcommand{\Dec}{{Decision}\xspace}
\newcommand{\iso}{{isoperimetric number}\xspace}
\newcommand{\rec}{{rec}}
\newcommand{\trans}{{trans}}
\newcommand{\cE}{{{\cal E}}\xspace}
\newcommand{\cA}{{{\cal A}}\xspace}
\newtheorem{theorem}{Theorem}
\newtheorem{lemma}{Lemma}
\newtheorem{corollary}{Corollary}
\newtheorem{definition}{Definition}
\begin{document}

\title{
Time and Communication Complexity of\\ \LE in \ANs
}

\author{
Dariusz R. Kowalski\\
Augusta University, Augusta, GA, USA,\\ 
SWPS University, Warsaw, Poland.\\
\href{mailto:dkowalski@augusta.edu}{dkowalski@augusta.edu}
\and
Miguel A. Mosteiro\\
Pace University,\\ New York, NY, USA.\\
\href{mailto:mmosteiro@pace.edu}{mmosteiro@pace.edu}
}

\date{}

\maketitle

\begin{abstract}
We study the problem of randomized \LE in synchronous distributed networks with indistinguishable nodes. We consider algorithms that work on networks of arbitrary topology in two 
settings,
depending on whether the size of the network, i.e., the number of nodes $n$, is known or not. 
In the former setting, we present a new \LE protocol that improves over previous work
by lowering message complexity and making it 
\mm{close to a lower bound by a factor in~$\widetilde{O}(\sqrt{t_{mix}\sqrt{\Phi}})$, where $\Phi$ is the conductance and $t_{mix}$ is the mixing time of the network graph.}
We then show that lacking the network size no \LE algorithm can guarantee that the election is final with constant probability, even with unbounded communication. Hence, we further classify the problem as \sLE (the classic one, requiring knowledge of $n$ -- as is our first protocol) or \eLE, and present a new polynomial time and message complexity \eLE algorithm in the setting without knowledge of network size.
We analyze time and message complexity of our protocols in the \congest model of communication.
%
\end{abstract}


\section{Introduction}
\label{sec:intro}

\LE is one of the most studied problems in distributed computing. The aim is to select exactly one of the processors participating in the computing platform. For instance, in computer networks, to select a unique node to adopt a special role. The \LE problem is important for distributed computing because having a distinguished processor is crucial for many applications, such as resource allocation, scheduling, synchronization and any other problems where symmetry among processors needs to be broken to make decisions. 
Studies of \LE include deterministic and randomized algorithms, explicit (all nodes know who is the leader) and implicit versions, and algorithms with bounded running time or eventual termination~\cite{lynchbook,attiyabook}. 

Motivated by recent developments for massive Ad-hoc Networks, possibly embedded in the Internet of Things, in this work we study \LE in \ANs. That is, networks where nodes do not have identifiers (ID), labels, or any other form of distinguishing among themselves initially. Given that it is not possible to solve \LE deterministically in such scenario~\cite{angluin1980local} we focus on randomized algorithms with provable performance guarantees. Given that energy is usually a crucial limited resource in Ad-hoc Networks, we evaluate our protocols by their time and message complexity, limiting the per-round communication through each link to $O(\log n)$ bits, where $n$ is the number of network nodes, known as the \congest~model. 

A usual approach in an environment where nodes are indistinguishable is to choose IDs at random. To guarantee uniqueness the probability is a function of $n$. However, 
\mm{in Ad-hoc Networks composed by inexpensive nodes deployed in massive numbers, it may be very difficult or inconvenient to determine their number.}
Hence, we study \LE algorithms for two models of \ANs: with known or unknown network size. As we show, not knowing $n$ there is no algorithm that elects a leader and stops, even with constant probability. Intuitively, the reason is that not knowing $n$ it is always possible to hear in the future from other nodes located far away. Nevertheless, it is possible to elect a local leader based on the knowledge gained so far and change such choice in the future. Motivated by this intuition, we further specify 
\mm{the problem as \eLE, naming the standard problem as \sLE when necessary for clarity.}
In both versions the time to elect the final \ldr is bounded, 
\mm{but in \eLE the algorithm is not required to stop, since nodes do not know if further knowledge of the network gained later may change decisions already taken.} 


\paragraph{Contributions} 


For \ANs where 
the network size $n$ is known, we present a \LE algorithm that elects a unique leader using $\widetilde{O}(\sqrt{n t_{mix}/\Phi})$~\footnote{Throughout, we use the asymptotic notation $\widetilde{O}(\cdot)$ to mean that $f(n) \in \widetilde{O}(g(n))$ if $\exists k : f(n) \in O(g(n) \log^k g(n))$, as well as the analogous $\widetilde{\Theta}(\cdot)$ and $\widetilde{\Omega}(\cdot)$.} point to point messages/bits of communication in the \congest model, with high probability,\footnote{Throughout, we say that a stochastic event holds \emph{with high probability}, or \emph{whp} for short, if it holds with probability at least $1-1/n^c$, for some constant $c>0$.} 
where $t_{mix},\Phi$ are the network mixing time and \mm{graph conductance (as defined in Section~\ref{sec:model})}, respectively.
The algorithm works in time $O(t_{mix}\log^2 n)$, knowing conductance and mixing time (cf. Theorem~\ref{thm:knownn}). 
\mm{
In a PODC 2018 paper, Gilbert et al.~\cite{GilbertRSpodc2018} showed that
$O( t_{mix}\sqrt{n} \log^{7/2} n)$ messages are enough to solve implicit (Irrevocable) \LE with high probability in $O(t_{mix} n^2)$ time, for known network size $n$. 
With respect to that work, our \sLE algorithm improves on messages while having asymptotically the same time complexity.
}
\dk{The former holds because} the term $\sqrt{n t_{mix}/\Phi}$ is asymptotically smaller than $\sqrt{n} t_{mix}$,
for $t_{mix} = \omega(1/\Phi)$, and never asymptotically larger, due to 
the known bounds $1/\Phi \leq t_{mix} \leq 1/\Phi^2$~\cite{montenegro2006mathematical}. 
Our algorithm is also nearly optimal for networks with a reasonable expansion, $t_{mix} = \widetilde{\Theta}(1/\Phi)\le \widetilde{O}(D)$, as~for~such networks it achieves time $\widetilde{O}(D)$ and message complexity
$\widetilde{O}(\sqrt{n t_{mix}/\Phi})\le \widetilde{O}(\sqrt{n}/\Phi)$, where the latter is away from the lower bound $\Omega(\sqrt{n}/\Phi^{3/4})$
in~\cite{GilbertRSpodc2018} by a factor~$\widetilde{O}(1/\Phi^{1/4})$.

\dk{In order to accomplish the task with a nearly optimal communication, our algorithm relies on a novel technique of cautious broadcast.
In short, a spanning tree is being created by randomly selected candidates \mm{for leadership} with a random ID, in a distributed way but under restriction that
only nodes in less populated branches are given permit to extend the tree (by selecting a random neighbor).
This technique allows to create ``territories'' which could be efficiently discovered by a small number of independent random walks
initiated by other candidates; reversing the cautious broadcast allows to learn random IDs of other candidates, and 
the one that does not receive any candidate with a bigger (random) ID becomes leader, all with high probability.}

For \ANs with unknown $n$, we show that, 
for any non-decreasing positive integer function $T(n)$ and a constant $0<c<1$,
there is no algorithm that solves the (Irrevocable) \LE problem in time $T(n)$ with probability $c$ (cf. Theorem~\ref{thm:imposs}).
\mm{To prove such claim we use a probabilistic pumping-wheel technique. In brief, we show that in a given cycle formed by enough number of instances of a path of nodes called a \emph{witness,} with large enough probability, the information \emph{pumped} towards the center of some witness is ``not enough''. Consequently, nodes in both sides of the center wrongly elect leaders separately.} 

In face of such impossibility, we designed an algorithm that solves the \eLE problem in the \congest model with high probability with time and message complexities in \mm{$\widetilde{O}(n^{4(2+\epsilon)})$ and $\widetilde{O}(n^{4(2+\epsilon)} m )$} respectively, for any $\epsilon>0$, where $m$ is the number of network links 
(cf. Corollary~\ref{cor:unknownN}).  
%
\mm{
Our algorithm, that solves the explicit version of \eLE, does not use any knowledge of the network. For settings where the isoperimetric number of the network graph $i(G)$ (as defined in Section~\ref{sec:model}) is known to the algorithm, we show tighter bounds of 
$\widetilde{O}(n^{4(1+\epsilon)}/i(G)^2)$ 
time and 
$\widetilde{O}(n^{4(1+\epsilon)}m/i(G)^2)$ 
messages
(cf. Theorem~\ref{thm:unknownN}). 
}

\mm{
We call this algorithm \eLEname because it embeds two novel techniques. On one hand, to decide leadership \emph{blindly} (i.e. without knowledge of the network) we combine each node ID chosen at random with an estimate of the network size used to define the sample space of such choice. The latter works as a \emph{certificate} of the validity of such ID. On the other hand, nodes probe each estimate through a \emph{diffusion} process of some node potential values. The process includes various alarms and \emph{thresholds} to detect whether the estimate is close enough to the network size or not. 
}

Our study of \eLE protocols is the first one to the best of our knowledge.

\paragraph{Roadmap} The rest of the paper is organized as follows. In the next section we formally define the problems studied, the model, and necessary notation and definitions. 
We overview previous work in Section~\ref{sec:relwork}. 
We present our \sLE protocol using network size in Section~\ref{known-n}. 
We prove the impossibility of \sLE without using network size and present our \eLEname protocol and its analysis in Section~\ref{sec:unknownN}. 
We finish with conclusions and open problems. 

\section{Model, Problem, and Notation}
\label{sec:model}

We consider a communication network where every pair of nodes are able to communicate, possibly through multiple hops. We assume that communication links and network nodes do not fail after startup. We model the topology of such network with a connected undirected graph $G=(V,E)$, where $V$ is the set of nodes and $E$ is the set of links, each of them given as an unordered pair of nodes. We say that nodes $u$ and $v$ are \emph{neighbors} if $(u,v)\in E$.
We denote $n=|V|$ and $m=|E|$.

We assume that time is slotted in \emph{rounds} of communication, and we assume that the network is \emph{globally synchronous}, that is, the time length of a round is the same for all nodes, and that all nodes can start the execution of any given protocol simultaneously. We model communication limits using the \congest model~\cite{peleg2000distributed}. That is, in each synchronous round, each node can send a \emph{message} of size $O(\log n)$ bits to each of its neighbors. We assume that local computations take negligible time --
we measure time performance as the number of rounds of communication. 

We assume that nodes do not have default identities (ID), but they do have some labeling of the communication links to neighboring nodes, 
known as~\emph{port~numbers}.

\mm{
With respect to the randomized \LE problem, we adopt the following definition from~\cite{GilbertRSpodc2018}.
\begin{definition}
Every node of a given distributed network has a flag boolean variable initialized to false and, after the process of election, with high probability, only one node, called the leader, raises its flag by setting the flag variable to true. An algorithm is said to solve \LE in $t$ rounds, if within $t$ rounds nodes elect a unique leader with high probability, and all nodes stop the election process after $t$ rounds.
\end{definition}
\LE is usually further specified as \emph{explicit} when all nodes learn which node is the leader, or \emph{implicit} otherwise.
In this work, we further specify \LE to allow nodes to continue executing the algorithm, as follows.
\begin{definition}
Every node of a given distributed network has a flag boolean variable initialized to false and, after the process of election, with high probability, only one node, called the leader, raises its flag by setting the flag variable to true. An algorithm is said to solve \eLE in $t$ rounds, if within $t$ rounds nodes elect a unique leader with high probability.
\end{definition}
The motivation behind this new definition (and the name ``revocable'') is to allow nodes to change a decision about which node is the leader. For instance, if more than one leader is elected and nodes detect the error. The rationale is that nodes may never know whether their leadership decision is final or will be changed in the future (for instance if the number of nodes is unknown). Nevertheless, the time to elect the final leader must be bounded, even if nodes do not know such bound.   
}

\mm{
To avoid confusion with the new definition, throughout the paper we denote the classic version of \LE where nodes must stop as \emph{\sLE}.
}

Throughout the paper, we use the following standard definitions. 
%

We define the \emph{mixing time} of an $n$-node graph $G$, $t_{mix}(G)$ 
as the minimum $t$ such that, for each $\pi_0$, $||P\pi_t-\pi_{*}||_\infty \leq 1/(2n)$, where 
$P$ is the transition matrix of a random walk in graph $G$,
$\pi$ is the probability distribution of the random walk over the nodes of $G$,
$\pi_0$ is the starting distribution,
$\pi_{*}$ is the stationary distribution,
and $||\cdot||_\infty$ denotes the maximum norm on a vector.

\mm{
We define the graph-topology \emph{\iso}\footnote{The \iso of a graph (a.k.a. graph Cheeger constant) is the discrete analogue of the Cheeger isoperimetric constant. The analogy is described in detail in~\cite{buser1984bipartition}.}~\cite{mohar1989isoperimetric} as follows.
For a graph $G$ with set of nodes $V$ and non-empty subset $S\subseteq V$, let $\partial S$ denote the set of links of $G$ with one end in $S$ and the other end in $\overline{S}=V\setminus S$. Then, the \iso of graph $G$ is defined~as
\begin{align*}
i(G) &= \min_{\substack{S\subseteq V:\\|S|\leq |V|/2}} \frac{|\partial S|}{|S|}.
\end{align*}
}

\mm{
We adopt~\cite{GilbertRSpodc2018} the customary definition of \emph{graph-conductance} of a graph $G$ as the minimum over all cuts of the ratio between the number of edges across the cut and the minimum sum of the degrees of the nodes in each partition. Specifically, for any subset $S\subseteq V$, let $Vol(S)=\sum_{v\in S} d_v$, where $d_v$ is the number of neighbors of $v$. Then,
\begin{align*}
\Phi(G) &= \min_{S\subset V} \frac{|\partial S|}{\min\{Vol(S),Vol(\overline{S})\}}.
\end{align*}
}

\mm{
Also, we adopt~\cite{SJ:count} the closely-related \emph{conductance} of a Markov chain $\mathbf{X}$ on a state space $[n]$ with transition matrix $P$ as  
\begin{align*}
\phi(P) =  \min_{0< |S| < n} \max\left\{\frac{ \sum_{i\in S, j\in \overline{S}} p_{ij} \pi_i }{\sum_{i\in S} \pi_{i} }, \frac{ \sum_{i\in \overline{S}, j\in S} p_{ij} \pi_i }{\sum_{i\in \overline{S}} \pi_{i} }\right\}. 
\end{align*}
where $p_{ij}$ are the entries of $P$ and $\vec\pi$ is the stationary distribution of $\mathbf{X}$. 
}

\mm{
We will apply the latter definition to analyze a diffusion process embedded in one of our algorithms on a network graph $G=(V,E)$ with state space $V$ and transition matrix $P$. 
In our analysis, $\mathbf{X}$ is finite, irreducible, and ergodic, and $P$ is doubly-stochastic. Thus, the stationary distribution $\vec\pi$ is unique and uniform. That is, $\pi_i=1/n$ for all $i\in V$.
Also, our diffusion process is such that the transition matrix is symmetric. 
Then, we simplify the later definition of conductance as follows.
}

\mm{
The \emph{conductance} of a Markov chain $\mathbf{X}$ on a state space $[n]$ with symmetric transition matrix $P$ and uniform stationary distribution is defined as 
\begin{align*}
\phi(P) =  \min_{S\subset V} \frac{ \sum_{i\in S, j\in \overline{S}} p_{ij} }{\min\{|S|,|\overline{S}|\}}. 
\end{align*}
where $p_{ij}$ are the entries of $P$. 
}



\mm{
In all the above notation, we drop the specification of the graph or transition matrix when clear from context.
}


\section{Related Work}
\label{sec:relwork}


\LE is one the most fundamental and well-studied problems in distributed computing. The related literature includes a myriad of network topologies, communication models, and classes of algorithms, providing bounds in time and message complexity (e.g.~\cite{
LEoriginal,
awerbuch1987optimal,
peleg1990time,
KuttenJACM,
KuttenTCS,
frederickson1987electing,
GilbertRSpodc2018,
NO:unifleader,
jurdzinski:leader,
kowalski2013leader,
chlebus2012electing,
jurdzinski2012distributed
}).
However, even intuitive bounds for \LE randomized algorithms have not been proved until recently. In this section, we overview previous work on bounding time and message complexity in the \congest model, for the specific case of randomized algorithms that solve implicit \LE in synchronous networks of arbitrary topology, with $n$ un-labeled nodes, $m$ links, diameter $D$, conductance $\phi$, and random-walk mixing time $t_{mix}$. We summarize the most relevant bounds and the results in this work in Table~\ref{table:relwork}.

\mm{In a 2015 J. ACM paper,} Kutten et al.~\cite{KuttenJACM} showed the existence of graphs such that any universal\footnote{Algorithms that work for all graphs and all node ID assignments.} \LE algorithm that succeeds with at least some constant probability requires $\Omega(m)$ messages in expectation and $\Omega(D)$ time with constant probability, even if $n$, $m$, and $D$ are known. They also present $O(m)$ expected messages and $O(D)$ time algorithms for different assumptions of network-characteristics knowledge and success probabilities. (Refer to Table~\ref{table:relwork} for details.)
In separate work~\cite{KuttenTCS}, the same authors focused on complete and non-bipartite networks. 

\mm{In a PODC 2018 paper,} for known network size $n$, Gilbert et al.~\cite{GilbertRSpodc2018} showed
$O( t_{mix}\sqrt{n} \log^{7/2} n)$ messages are enough to solve implicit \LE with high probability in $O(t_{mix} n^2)$ time.
They also show that there exist graphs such that
$\Omega(\sqrt{n}/\phi^{3/4})$ expected messages are needed to solve the problem with probability at least $1-o(1)$, even if nodes have unique IDs and know the size of the network $n$. 
Knowledge of other network characteristics such as conductance, mixing time, or diameter is not needed.
Given the bounds on mixing time of $1/\phi \leq t_{mix} \leq 1/\phi^2$~\cite{montenegro2006mathematical},
aside from $polylog$ factors, these results leave a gap of at least $1/\phi^{1/4}$ and at most $1/\phi^{5/4}$. 
%
The algorithm presented breaks the $\Omega(m)$ lower bound of~\cite{KuttenJACM} for well-connected graphs. 
In the same work, for the case when the network size is unknown, it is proved that for any 2-connected graph there exist an ID assignment and port mapping such that $\Omega(m)$ expected messages are needed to solve implicit \LE with probability at least some constant.

Some of the results above are extended to other problems, such as Broadcast, tree construction and explicit \LE, once a leader has been elected.

\begin{table}[htpb]
\centering
\resizebox{\columnwidth}{!}{%
\begin{tabular}{c||c|c|c|c}
known & succes wp $1$ & succes whp & succes wp $1-o(1)$ & succes wp constant \\
\hline
\hline
\rule{0pt}{4ex}
$n,D$ & \begin{tabular}{c}$O(m)$ exp. msgs,\\ $O(D)$ exp time~\cite{KuttenJACM}\end{tabular} & & & \\[.1in]
\hline
\rule{0pt}{6ex}
\begin{tabular}{c}$n,\Phi,$\\$t_{mix}$\end{tabular} & & \begin{tabular}{c}{\bf $\widetilde{O}(\sqrt{nt_{mix}/\Phi})$ msgs,}\\ {\bf $O(t_{mix}\log^2 n)$ time}\\ {\bf {[}this work{]}} \end{tabular} &  & \\[.1in]
\hline
\rule{0pt}{4ex}
\multirow{2}{*}{$n$} & & \begin{tabular}{c} $O( t_{mix}\sqrt{n} \log^{7/2} n)$ msgs, \\$O(t_{mix} \log^2 n)$ time~\cite{GilbertRSpodc2018} \end{tabular} & \begin{tabular}{c}$\exists G$ : $\Omega(\sqrt{n}/\phi^{3/4})$\\ exp. msgs~\cite{GilbertRSpodc2018}\end{tabular} & \begin{tabular}{c}$\exists G$ : $\Omega(m)$\\ exp. msgs~\cite{KuttenJACM} \end{tabular} \\[.1in]
\cline{2-5}
\rule{0pt}{4ex}
& &  \begin{tabular}{c}$O(m \min(\log\log n, D))$\\ exp. msgs, $O(D)$ time~\cite{KuttenJACM}\end{tabular} &  & \begin{tabular}{c}$\exists G$ : $\Omega(D)$ time~\cite{KuttenJACM}\end{tabular} \\[.1in]
\cline{2-5}
\rule{0pt}{4ex}
& & \begin{tabular}{c}$O(m+n\log n)$ msgs,\\ $O(D\log n)$ time~\cite{KuttenJACM}\end{tabular} &  & \begin{tabular}{c}$O(m)$ exp. msgs,\\ $O(D)$ time~\cite{KuttenJACM}\end{tabular} \\[.1in]
\hline
\rule{0pt}{4ex}
- & & \begin{tabular}{c} {\bf $O(n^{4(2+\epsilon)} m \log^5 n)$ msgs,}\\ {\bf $O(n^{4(2+\epsilon)}\log^5 n)$ time}\\ {\bf {[}this work{]} (*)} \end{tabular} & & \begin{tabular}{c}$\forall$ 2-connected $G$ : $\exists$ labeling :\\ $\Omega(m)$ exp. msgs~\cite{GilbertRSpodc2018}\\$\forall T(n) : \nexists$ LE alg\\ in time $T(n)$  {\bf {[}this work{]}}\end{tabular} \\[.1in]
\hline
\rule{0pt}{4ex}
\mm{$i(G)$} & & \mm{\begin{tabular}{c} {\bf $O(\frac{n^{4(1+\epsilon)}}{i(G)^2} m \log^5 n)$ msgs,}\\ {\bf $O(\frac{n^{4(1+\epsilon)}}{i(G)^2}\log^5 n)$ time}\\ {\bf {[}this work{]} (*)} \end{tabular}} & & 
\end{tabular}%
}
\vspace{.1in}
\caption{Summary of relevant previous work and results in this work for randomized \sLE and (*) for randomized \eLE.}
\label{table:relwork}
\vspace{-.2in}
\end{table}


\section{Known Network Size}
\label{known-n}
In this section we present an \sLE algorithm for Anonymous Networks where 
the network size $n$ is known. 

Let $c>0$ be a sufficiently large constant (to satisfy all needs of the analysis).
We also assume that the parameter $x$ of the algorithm is at most some polynomial of $n$, and thus $\log x = O(\log n)$.

\mm{
A description of our algorithm for known 
number of nodes $n$, mixing time of the network $t_{mix}$, conductance $\Phi$, and a parameter $x$ \dk{and a constant $c>0$ that influence} correctness and communication complexity, follows. 
Pseudocode details can be found in Algorithm~\ref{alg:knownN}, 
\dk{and the auxiliary procedures in Algorithms~\ref{alg:knownN_broadcast},~\ref{alg:knownN_broadcast-1},~\ref{alg:knownN_broadcast-2}
and~\ref{alg:knownN_sub}}.
Given that we prove in our analysis asymptotic upper bounds, for the algorithm is enough to have linear upper bounds on $n$, $t_{mix}$, and $\Phi$. Nevertheless we use the exact values in our presentation for simplicity. 
} 

\begin{description}
\item[Selecting random IDs:]
Each node chooses its ID at random from the set $\{1,\ldots,n^4\}$.
\item[Selecting candidate nodes locally:]
Each node chooses to be a candidate independently with probability $(c \log n)/n$.
This means, and we will show this in the analysis, that there will be at most $4c\log n$ selected candidates whp.
\item[Candidate nodes span their territories by using Cautious broadcasts:]
Each candidate node initiates Cautious broadcast,
\dk{described later.}
\mm{Thus, each node is involved in multiple parallel executions of Cautious broadcast, call its number $y$.}
Time is partitioned into super-rounds, each containing $4c\log n$ subsequent rounds:
each node involved in a Cautious broadcast
for some ID dedicates at most one round in a super-round
to execute a corresponding step of \dk{the independent Cautious broadcast for this ID.} 
More precisely, if a node is involved in $y\le 4c\log n$ \dk{different parallel executions of Cautious broadcast methods for different IDs,}
i.e., initiated or received messages in those executions, it assigns its first $y$ rounds of the super-round to them 1-to-1 and runs
separately. In case $y>4c\log n$, \mm{the} nodes could assign arbitrary $4c\log n$ executions to available rounds --
we will show in the analysis that there are at most $4c\log n$ such \dk{parallel} Cautious broadcasts, whp,
therefore no node will ever need to assign arbitrarily \mm{a} \dk{subset of} executions to rounds, and thus each of these executions will be run correctly in the number of super-rounds upper bounded by \mm{time} complexity of \mm{the} Cautious broadcast \dk{method, whp.}
%
We call the nodes informed by a candidate (ID) during the execution of Cautious broadcast
its {\em broadcast territory}.
%
\mm{We omit these time-partitioning details in the pseudocode in Algorithm~\ref{alg:knownN} for clarity \dk{(they only cause $O(\log n)$ delay)}.}
\item[Candidate probe territories by random walks, larger wins:]
Each candidate node issues $x$ independent random walks of length $c\cdot t_{mix}\log n$, carrying its ID.
\dk{A random walk, used by our algorithm, propagates 
the source ID of the walk to a randomly (uniformly) selected neighbor of the current walk node,
while staying in the same node with probability $1/2$.}
If two or more walks meet at the same node, a decision about next destination is made independently for each walk.
In order to be implemented in the CONGEST model, note that there are only $O(\log n)$ different IDs of the walks, whp,
therefore by each link there is only sent an information about ID of the walk and the number of copies of the walk
that chose that link to propagate (which is encoded by $\log x = O(\log n)$ bits.
Further, 
once two different IDs meet, the smaller of the IDs is substituted by the larger one,
so that at most one ID is sent by a link each round. 
Each visited node stores the largest random walk ID ever seen.
\dk{A pseudo-code of the random-walk function is \dk{a part of} Algorithm~\ref{alg:knownN_sub}.}
\item[\dk{Convergecast of winning candidate IDs along the spanning tree of each broadcast territory:}]
\mm{
For each candidate, each node in its tree spanned in the execution of Cautious broadcast
in the beginning of the algorithm sends to its parent the largest walk ID seen.}
This is repeated $c\cdot t_{mix} \log n$ times, thus at the end the candidate \dk{(the source of that previously done cautious-broadcast)}
gets the largest walk ID that hit its \dk{broadcast} territory. 
\dk{Note that the partition of time into super-rounds, as was done to accommodate parallel executions of Cautious broadcast in the beginning of the algorithm, is not required, as during the convergecast a node passes only the largest walk ID ever seen.}
The candidate who did not hear a bigger candidate ID becomes a leader.
\dk{A pseudo-code of the convergecast function is \dk{a part of} Algorithm~\ref{alg:knownN_sub}.}
\end{description}

\noindent
{\bf Procedure Cautious broadcast}
is as follows.
The broadcast source performs a broadcast of its ID by spanning a tree, \dk{in a distributed way,} in 
\dk{$c\cdot t_{mix}\log n$ rounds,} 
using randomization for choosing new neighbors \dk{but only in sparse branches. More precisely,} 
in each round $t$ of the broadcast, each node $w$ who received the source ID maintains the following
knowledge and takes action accordingly:
\begin{itemize}
\item
its parent (originally it is the node who sent the source ID to $w$ as first, and in case of many such nodes -- one of them arbitrarily selected by $w$); 
the parent cannot be revoked until the end of the broadcast;
\item
its children -- each node who sets $w$ as its parent sends a message to $w$ 
confirming that it has chosen $w$ as its parent; then $w$ adds it to the set of its children; 
\item
the confirmed number of nodes in its subtree 
(i.e., nodes from whom the ``parent'' relation leads to $w$),
where confirmed means the sum of confirmed numbers obtained from all the received messages from its children \dk{plus one};
once its confirmed number exceeds a threshold $2^i$, for any $i\ge 0$, \mm{node $w$} sends this number to its parent and changes its searching status from active to passive, see \dk{the bullet} below;
\item
its searching status \dk{and, optionally, re-activation prompt: the status} could be active, which means that it can send a message to a randomly selected neighbor
(among ones to which it has not sent/received a message so far), or passive, in which case it does not send any message to a new
neighbor;
once the node receives a message from its parent to re-activate \dk{(so called re-activation prompt)}, it changes its status to active and sends re-activate message to its 
children;
a node also sends re-activate messages to its children from which it has received a new confirmed number,
if the new number did not raised its own confirmed number above the next threshold $2^i$ (this way the parent confirms
that the change on the number of nodes in specific subtree is legitimate, in the sense that it has not increased substantially the number
of nodes in the higher-level tree);
once a node reaches a threshold $x t_{mix}\Phi$ \mm{on the confirmed number of nodes in its subtree}, it sends messages to all its children (and the parent), and switches to passive mode
until the end of \dk{this execution of Cautious broadcast.}
\end{itemize}
\dk{A pseudo-code of Cautious broadcast could be find in Algorithms~\ref{alg:knownN_broadcast},~\ref{alg:knownN_broadcast-1} and~\ref{alg:knownN_broadcast-2}.}


\begin{algorithm*}[htbp]
\caption{\sLE algorithm for each node. All variables are global to all methods. $c>0$ is a constant.}
\label{alg:knownN}
\DontPrintSemicolon
\SetKwFor{round}{in one communication round:}{}{}
\SetKwFunction{le}{leader-election}
\SetKwFunction{cb}{cautious-broadcast}
\SetKwFunction{ccb}{candidate-cautious-broadcast}
\SetKwFunction{nccb}{non-candidate-cautious-broadcast}
\SetKwFunction{rw}{random-walk}
\SetKwFunction{cc}{convergecast}
\SetKwProg{myalg}{Procedure}{}{}
\SetKwProg{myfunc}{Function}{}{}

\myalg{\le{}}{
	$ID\gets i$ with probability $1/n^4$, for $i \in\{1,\dots,n^4\}$\;
	$
	candidate \gets
	\begin{cases}
		true, \textrm{ with probability $(c\log n)/n$,}\\ 
		false, \textrm{ with probability $1-(c\log n)/n$}.\\
	\end{cases}
	$\;
	\cb{}\;
	\rw{}\;
	\cc{}\;
	$leader \gets (ID=ID_{\max})$\;
}

\end{algorithm*}

\begin{algorithm*}[htbp]
\caption{Cautious-broadcast method.  Port numbers of this node: $1\dots N$. 
}
\label{alg:knownN_broadcast}
\DontPrintSemicolon
\SetKwFor{round}{in one communication round:}{}{}
\SetKwFunction{le}{leader-election}
\SetKwFunction{cb}{cautious-broadcast}
\SetKwFunction{rw}{random-walk}
\SetKwFunction{cc}{convergecast}
\SetKwFunction{pr}{process-receptions}
\SetKwFunction{pt}{prepare-transmissions}
\SetKwProg{myalg}{Procedure}{}{}
\SetKwProg{myfunc}{Function}{}{}

\setcounter{AlgoLine}{0}
\myfunc{\cb{}}{  
	\lIf{candidate}{$status\gets active$, $source \gets ID$}
	\lElse{$status\gets passive$, $source \gets null$}
	$parent\gets null$, $children\gets\emptyset$, 
	$threshold\gets 1$\; 
	$avail\gets \{1\dots N\}$, 
	$size[1\dots N]\gets 0$,
	$status[1\dots N]\gets passive$\;
	$\rec[1\dots N]\gets null$, 
	$\trans[1\dots N]\gets null$\;
	\For{$c~t_{mix} \log n$ synchronous communication rounds}{
		\If{$status\neq stop$}{
			\pr{}\;
			\pt{}\;
			\tcp{synchronous concurrent communication through all ports}
			\For{each $v\in [N]$}{
				\lIf{$\trans[v]\neq null$}{transmit $\trans[v]$ through port $v$}
			}
			\For{each $v\in [N]$}{
				$
				\rec[v] \gets
				\begin{cases}
					null, \textrm{ if no message received through port $v$,}\\ 
					\textrm{message received,  otherwise}.\\
				\end{cases}
				$\;
			}
		}
	}
}

\end{algorithm*}

\begin{algorithm*}[htbp]
\caption{Cautious-broadcast auxiliary method.  Port numbers of this node: $1\dots N$. 
}
\label{alg:knownN_broadcast-1}
\DontPrintSemicolon
\SetKwFor{round}{in one communication round:}{}{}
\SetKwFunction{le}{leader-election}
\SetKwFunction{cb}{cautious-broadcast}
\SetKwFunction{rw}{random-walk}
\SetKwFunction{cc}{convergecast}
\SetKwFunction{pr}{process-receptions}
\SetKwFunction{pt}{prepare-transmissions}
\SetKwProg{myalg}{Procedure}{}{}
\SetKwProg{myfunc}{Function}{}{}

\setcounter{AlgoLine}{0}
\myfunc{\pr{}}{  
	\eIf{candidate}{
			\For{each $v\in [N]$}{
					\lIf{$\rec[v] = \langle stop\rangle$}{
						$status\gets stop$
					}
					\ElseIf{$\rec[v]$ is some size}{
						$size[v]\gets \rec[v]$, 
						$status[v]\gets active$,
						$children\cup\{v\}$,
						$avail \gets avail\setminus\{v\}$\;
					}
			}
	}{
				\For{each $v\in [N]$}{
					\Switch{$\rec[v]$}{
						\lCase(\tcp*[f]{some tree node reached the threshold}){$\langle stop\rangle$}{
							$status\gets stop$
						}
						\lCase(\tcp*[f]{$v$ is the parent}){$\langle activate\rangle$}{
							$status \gets active$
						}
						\lCase(\tcp*[f]{$v$ is the parent}){$\langle deactivate\rangle$}{
							$status \gets passive$
						}
						\Case{some ID}{
							\If(\tcp*[f]{$v$ becomes the parent}){$source=null$}{
								$source \gets \rec[v]$, 
								$parent \gets v$, 
								$avail \gets avail\setminus\{v\}$,
								$status\gets active$,
								$status[v]\gets active$
							}
						}
						\Case(\tcp*[f]{$v$ is a child}){some size}{
							$size[v]\gets \rec[v]$, 
							$status[v]\gets active$,
							$children\cup\{v\}$,
							$avail \gets avail\setminus\{v\}$
						}
					}
				}
	}
}

\end{algorithm*}

\begin{algorithm*}[htbp]
\caption{Cautious-broadcast auxiliary method.  Port numbers of this node: $1\dots N$. 
}
\label{alg:knownN_broadcast-2}
\DontPrintSemicolon
\SetKwFor{round}{in one communication round:}{}{}
\SetKwFunction{le}{leader-election}
\SetKwFunction{cb}{cautious-broadcast}
\SetKwFunction{rw}{random-walk}
\SetKwFunction{cc}{convergecast}
\SetKwFunction{pr}{process-receptions}
\SetKwFunction{pt}{prepare-transmissions}
\SetKwProg{myalg}{Procedure}{}{}
\SetKwProg{myfunc}{Function}{}{}

\setcounter{AlgoLine}{0}
\myfunc{\pt{}}{  
			\lIf{$threshold\geq xt_{mix}\Phi$}{
				$status\gets stop$
			}
			\eIf{$status = stop$}{
				\For{each $v\in children$}{
					$\trans[v] = \langle stop\rangle$,
					$status[v]\gets stop$\;
				}
				\If{$\lnot candidate$}{
					$\trans[parent] = \langle stop\rangle$,
					$status[parent]\gets stop$\;
				}
			}{

				$\trans[1\dots N]\gets null$\;
				$subtree\gets 1$\;
				\lFor{each $v\in children$}{
					$subtree\gets subtree+size[v]$
				}
	\eIf{candidate}{
				\eIf{$subtree<threshold$}{
					\For{each $v\in children$ such that $status[v]\neq active$}{
						$\trans[v] = \langle activate\rangle$,
						$status[v]\gets active$\;
					}
				\If{$avail\neq\emptyset$}{
					choose a port $v\in avail$ uniformly at random\;
					$\trans[v] \gets \langle source\rangle$\;
				}
				}
				{
					$threshold\gets 2~threshold$\;
					\For{each $v\in children$ such that $status[v]\neq passive$}{
						$\trans[v] = \langle deactivate\rangle$,
						$status[v]\gets passive$\;
					}
				}
	}{
				$\trans[parent]\gets subtree$\;
				\eIf{$subtree<threshold$ {\bf and} $status=active$}{
					\For{each $v\in children$ such that $status[v]\neq active$}{
						$\trans[v] = \langle activate\rangle$,
						$status[v]\gets active$\;
					}
				\If{$avail\neq\emptyset$}{
					choose a port $v\in avail$ uniformly at random\;
					$\trans[v] \gets \langle source\rangle$\;
				}
				}
				{
					\lIf{$subtree\geq threshold$}{$threshold\gets 2~threshold$}
					\For{each $v\in children$ such that $status[v]\neq passive$}{
						$\trans[v] = \langle deactivate\rangle$,
						$status[v]\gets passive$\;
					}
				}
	}
			}
}

\end{algorithm*}

\begin{algorithm*}[htbp]
\caption{\sLE auxiliary methods.  Port numbers of this node: $1\dots N$. 
}
\label{alg:knownN_sub}
\DontPrintSemicolon
\SetKwFor{round}{in one communication round:}{}{}
\SetKwFunction{le}{leader-election}
\SetKwFunction{cb}{cautious-broadcast}
\SetKwFunction{rw}{random-walk}
\SetKwFunction{cc}{convergecast}
\SetKwProg{myalg}{Procedure}{}{}
\SetKwProg{myfunc}{Function}{}{}

\setcounter{AlgoLine}{0}

\setcounter{AlgoLine}{0}
\myfunc{\rw{}}{  
	$\#walks \gets$ 0,
	$ID_{\max}\gets ID$,
	$counter \gets 0$,
	$\trans[1\dots N]\gets\langle ID_{\max}, counter\rangle$\;
	\If{candidate}{
		\For{$x$ times}{
			choose some $v\in N$ uniformly at random\;
			$\trans[v].counter ++$\;
		}
	}
	\For{$c~t_{mix} \log n$ synchronous communication rounds}{
			\tcp{synchronous concurrent communication through all ports}
			\For{each $v\in [N]$}{
				\lIf{$\trans[v].counter \neq 0$}{transmit $\trans[v]$ through port $v$}
			}
			$\#walks\_rx \gets$ 0\;
			\For{each $v\in [N]$}{
				\If{some $\langle ID', counter'\rangle$ received through port $v$}{
					$\#walks\_rx \gets \#walks\_rx + counter'$\;
					\lIf{$ID'>ID_{\max}$}{$ID_{\max}=ID'$}
				}
			}
			\tcp{prepare transmissions}
			$\#walks \gets \#walks + \#walks\_rx$,
			$counter \gets 0$,
			$\trans[1\dots N]\gets\langle ID_{\max}, counter\rangle$\;
			\For{$i\gets\#walks$ {\bf down to} $0$}{
				$
				walk \gets
				\begin{cases}
					true, \textrm{ with probability $1/2$,}\\ 
					false, \textrm{ with probability $1/2$}.\\
				\end{cases}
				$\;
				\If{walk}{
					choose some $v\in N$ uniformly at random\;
					$\trans[v].counter++$\;
					$\#walks--$\;
				}
			}
	}
}

\setcounter{AlgoLine}{0}
\myfunc{\cc{}}{  
	\For{$c~t_{mix} \log n$ synchronous communication rounds}{
		\tcp{synchronous concurrent communication through all ports}
		\lIf{$\lnot candidate$}{transmit $ID_{\max}$ through port $parent$}
		\lIf{some $ID'>ID_{\max}$ received}{$ID_{\max}\gets ID'$}
	}
}

\end{algorithm*}

\subsection{Analysis}

Given that we 
consider
the CONGEST model, our bounds apply to messages as well as bits of communication modulo a logarithmic factor.

\begin{lemma}
\label{l:broadcast}
For a given parameter $x$, procedure Cautious broadcast
takes time $O(t_{mix}\log n)$;
it sends $\widetilde{O}(xt_{mix})$ messages/bits and informs $\widetilde{\Omega}(xt_{mix}\Phi)$ nodes~whp.
\end{lemma}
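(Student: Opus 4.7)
The argument splits into the three bounds. The time bound is immediate: the main loop in Algorithm~\ref{alg:knownN_broadcast} executes exactly $c \cdot t_{mix} \log n$ rounds, giving $O(t_{mix} \log n)$.

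For informing $\widetilde{\Omega}(x t_{mix} \Phi)$ nodes whp, the plan is to partition the execution into stages indexed by the source's current threshold $2^i$, and to show that each stage grows the source's subtree by a factor of two within $O(t_{mix})$ rounds whp, until the stopping threshold $x t_{mix} \Phi$ is reached. Within stage $i$, with tree $S$ of size $|S|=k \leq x t_{mix} \Phi / 2$, the definition of graph-conductance gives $|\partial S| = \Omega(\Phi \cdot k)$ boundary edges. The cautious-passivation/re-activation rules are designed so that a constant fraction of these boundary edges have an active endpoint in $S$: any subtree that has not yet crossed its local threshold stays fully active under its parent's reactivation prompts, and such light subtrees cover a constant fraction of $S$ during growth. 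Each active endpoint attempts a random unused neighbor, adding a new node with probability $\Omega(\Phi)$ per attempt. A Chernoff bound over $O(t_{mix})$ rounds then yields the desired doubling whp. Iterating over $O(\log(x t_{mix} \Phi)) = O(\log n)$ stages fits within the time budget and drives the tree size to $\Omega(x t_{mix} \Phi)$ whp.

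For the message bound $\widetilde{O}(x t_{mix})$, I would separate transmissions into two classes. The first is tree-edge communications (size reports, activation, deactivation, and stop messages): each node crosses at most $O(\log(x t_{mix} \Phi)) = O(\log n)$ power-of-two thresholds, each triggering a constant number of parent--child notifications; with the tree of size $O(x t_{mix} \Phi)$, this class totals $O(x t_{mix} \Phi \log n) = \widetilde{O}(x t_{mix})$ messages since $\Phi \leq 1$. The second class is the random extension attempts, one per active node per round. Here an amortized count of active-node-rounds is needed: across the $O(\log n)$ stages, each lasting $O(t_{mix})$ rounds, the per-round size of the active set is kept small by the cautious scheme (a large subtree whose size has crossed its threshold is passivated until its parent confirms it), and summing the contributions across stages yields $\widetilde{O}(x t_{mix})$ attempts in total.

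The main obstacle is a simultaneous accounting that gives enough active nodes per round to realize the conductance-driven expansion (for the informed-node lower bound) while keeping the amortized active-set size small (for the message upper bound). I expect to resolve this via a potential argument keyed on the source's current threshold $2^i$, tracking together tree size, threshold level, and frontier size, and closing the estimates by a union bound of Chernoff concentrations over the $O(\log n)$ threshold-doubling stages.
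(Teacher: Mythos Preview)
Your route differs from the paper's in two ways. For the informed-node lower bound, the paper does not argue stage-by-stage via boundary-edge counting. Instead it makes two observations: (i) the Cautious broadcast contains an embedded random walk starting at the source (source picks a random neighbor, that neighbor picks a random neighbor, \ldots), delayed by at most $O(\log n)$ rounds per step because a node can be passive only that long before reactivation; hence $\Theta(t_{mix}\log n)$ protocol rounds simulate $\Theta(t_{mix})$ walk steps and place a majority of nodes within distance $\Theta(t_{mix}\log n)$ of the source; and (ii) since each node is passive for at most $O(\log n)$ consecutive rounds, up to logarithmic slowdown the process behaves like push rumor spreading, for which the paper invokes the analysis of~\cite{kowalski2013estimating} to conclude $\widetilde{\Omega}(xt_{mix}\Phi)$ informed nodes in $O(\Phi^{-1}\log n)\le O(t_{mix}\log n)$ rounds. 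Your threshold-doubling is a reasonable alternative framing of~(ii), but the paper offloads the conductance bookkeeping to the cited push-process result rather than redoing it, and it uses the random-walk embedding as an additional structural tool that your outline does not have.

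There is a genuine gap in your per-attempt claim. The assertion ``each active endpoint \ldots\ adds a new node with probability $\Omega(\Phi)$ per attempt'' is not correct at the level of a single node: a node $v$ with degree $d_v$ and $b_v$ edges to $\overline{S}$ succeeds with probability roughly $b_v/d_v$, which can be arbitrarily small compared to $\Phi$. Conductance only controls the aggregate $\sum_{v\in S} b_v = |\partial S| \ge \Phi\cdot \min\{Vol(S),Vol(\overline{S})\}$, so any growth bound must be phrased as an expected number of newly informed nodes per round summed over the active set, and then you must argue that passivation removes at most a constant (or polylogarithmic) fraction of that sum. The paper's lever for this is precisely the ``passive for at most $O(\log n)$ rounds in a row'' observation; without it your ``constant fraction of boundary edges have an active endpoint'' claim is unsupported, and with it you are essentially reproducing the push-process reduction the paper cites rather than a new stagewise argument.

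On the message bound your decomposition matches the paper's: at most $2\cdot x t_{mix}\Phi$ tree nodes (by the doubling thresholds capped at $x t_{mix}\Phi$), each incurring $O(\log n)$ control messages tied to threshold crossings, hence $\widetilde{O}(x t_{mix}\Phi)\le \widetilde{O}(x t_{mix})$. The paper is no more explicit than you are about separately bounding the random-extension attempts; it folds them into the ``polylogarithmic per node'' assertion. Your flagging of the active-set accounting as the main obstacle is accurate.
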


\begin{proof}
Time complexity and CONGEST model (i.e., $O(\log n)$ bits per point-to-point message)
follow directly from the description of the procedure.
The upper bound on the number of point-to-point messages sent follows from the following argumentation.
Due to control of doubling confirmed numbers
(of nodes in sub-trees), the total number of involved nodes never exceeds by factor $2$ the ultimate largest
threshold $xt_{mix}\Phi$. 
Therefore, the number of messages sent is bounded by this number 
\dk{multiplied by} 
the number
of messages in-between the informed nodes, which could be accounted as polylogarithmic per node
(indeed, a link is used a constant number of times per each change \dk{of the (exponential) thresholds of the confirmed numbers} at its end nodes).

It remains to prove that the procedure informs $\widetilde{\Omega}(xt_{mix}\Phi)$ nodes whp.
Observe that within the Cautious broadcast process we could distinguish a random walk process as follows:
it starts at the source node, and keeps going to a randomly selected node (if the selected node is already visited,
the walk goes there as well). One subtle issue is that the node in which the walk resides could be passive,
which means if the process selects randomly an unvisited neighbor, this is not what the broadcast does.
The broadcast has to wait, instead, but fortunately at most logarithmic number of rounds in a row.
Thus, in order to mimic $\Theta(t_{mix})$ steps of random walk, $\Theta(t_{mix}\log n)$ rounds
of Cautious broadcast protocols suffice. This means that a majority of nodes is within distance $\Theta(t_{mix}\log n)$
from the source. Using the analysis similar to the push process, c.f., \cite{kowalski2013estimating},
and \dk{applying the above argument} that until informing $x t_{mix}\Phi$ a node could be passive only by logarithmic number of rounds
while in the rest it performs a push-like action (with some coordination, but this even improves the speed of
source message propagation), we get that 
$\widetilde{\Omega}(xt_{mix}\Phi)$ nodes could be informed in $O(\Phi^{-1} \log n)\leq O(t_{mix}\log n)$ rounds;
here we used the fact that $\Phi^{-1} = O(t_{mix})$.
\end{proof}

\dk{In the remainder, fix 
$x=\widetilde{\Theta}\left(\sqrt{n\log n/(\Phi t_{mix})}\right)$.}

\begin{lemma}
\label{l:walks}
For $x=\widetilde{\Theta}\left(\sqrt{n\log n/(\Phi t_{mix})}\right)$, 
some walk with maximum ID visits some node in each candidate's broadcast territory whp.
\end{lemma}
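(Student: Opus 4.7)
The plan is to fix an arbitrary candidate's broadcast territory $S$ and show that, whp, at least one random walk carrying the maximum candidate ID visits $S$; a union bound over the $O(\log n)$ candidates then closes the argument. As preliminaries, I would first observe that whp the number of candidates lies in $\Theta(\log n)$ (Chernoff on the $n$ independent Bernoulli$((c\log n)/n)$ trials) and that the maximum-ID candidate $\ell$ is unique whp (since IDs are drawn from $\{1,\dots,n^4\}$). The node $\ell$ emits $x$ independent lazy random walks, each of length $c\,t_{mix}\log n$.

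Now fix a target candidate $c'$ with territory $S$ of size $s=\widetilde{\Omega}(x\,t_{mix}\,\Phi)$, guaranteed whp by Lemma~\ref{l:broadcast}. Consider a single walk of $\ell$. I would partition its $c\,t_{mix}\log n$ steps into $\Theta(\log n)$ consecutive blocks of length $t_{mix}$. By the mixing-time definition from Section~\ref{sec:model}, at the end of each block the distribution of the walk's location is within $L_\infty$-distance $1/(2n)$ of the stationary distribution $\pi_*$, so each block yields a near-sample from $\pi_*$ that lies in $S$ with probability at least $\pi_*(S)-1/(2n)$. Aggregating across the $x$ walks and the $\Theta(\log n)$ blocks per walk, and using the bound $\pi_*(S)=\mathrm{Vol}(S)/(2m)\ge s/(2m)$ together with $x^2=\widetilde{\Theta}(n\log n/(\Phi\,t_{mix}))$ and $s=\widetilde{\Omega}(x\,t_{mix}\,\Phi)$, I would show that the expected total number of (walk, block) pairs in which the walk sits in $S$ is $\widetilde{\Omega}(\log n)$. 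A Chernoff bound, leveraging full independence across the $x$ walks and approximate independence across blocks of a single walk (after an explicit coupling argument to rigorize the latter), yields that at least one such pair occurs whp, hence the walk visits $S$. A union bound over the $O(\log n)$ candidate territories then closes the statement, and since every walk emitted by $\ell$ carries the maximum ID (and the ID-absorption rule only extends the set of walks carrying it), the visiting walk indeed has the maximum ID.

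\textbf{Main obstacle.} The delicate step is converting the raw territory size $s$ into an adequate lower bound on $\pi_*(S)$. The bound $\pi_*(S)\ge s/(2m)$ is too weak on dense graphs, where $m=\Theta(n^2)$ erodes the expectation calculation. I would address this either by exploiting a structural property of broadcast territories --- namely, that the doubling-threshold discipline of Algorithms~\ref{alg:knownN_broadcast-1} and~\ref{alg:knownN_broadcast-2} causes $\mathrm{Vol}(S)$ to grow commensurately with $|S|$ times the typical degree encountered during expansion, producing a bound closer to $\pi_*(S)=\Omega(s/n)$ --- or, failing that, by replacing the block-sampling argument with a direct conductance-based hitting-time bound (e.g.\ expected hitting time of $S$ from stationary is $O(1/(\pi_*(S)\Phi))$) that meshes with our chosen $x$ without explicit $m$-dependence. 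Handling the residual correlation between near-stationary block samples of the same walk is a secondary, routine technical point.
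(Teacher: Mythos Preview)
Your high-level plan matches the paper's: fix a territory $S$, show the $x$ walks from the maximum-ID candidate hit $S$ whp, then union-bound over the $O(\log n)$ candidates. The paper's execution, however, is considerably terser than yours. It does not partition walks into blocks of length $t_{mix}$ or invoke any within-walk coupling; it simply asserts that each of the $x$ independent walks (of length $\Theta(t_{mix}\log n)$) hits $S$ with probability at least $\widetilde{\Omega}(|S|/n)$, so that the probability all $x$ miss is at most $\bigl(1-\widetilde{\Omega}(xt_{mix}\Phi)/n\bigr)^{x}$, which is polynomially small for the stated $x$. Only the independence across the $x$ walks is used; the extra $\Theta(\log n)$ near-stationary samples per walk that you extract, and the Chernoff/coupling step over them, play no role in the paper's calculation.

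The obstacle you single out --- that the near-stationary hitting probability is governed by $\pi_*(S)=\mathrm{Vol}(S)/(2m)$ rather than by $|S|/n$ on irregular graphs --- is precisely the point the paper glosses over: it writes $|S|/n$ without further justification. So your caution is not misplaced; you are in fact being more careful here than the paper's own proof. That said, the paper evidently treats the $|S|/n$ estimate as acceptable (effectively absorbing any degree discrepancy into the $\widetilde{\Omega}$), so for the purpose of reproducing the paper's argument, your block partitioning, the coupling for intra-walk dependence, and the two proposed fixes for the volume issue are all additional machinery beyond what the authors deploy.
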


\begin{proof}
It is enough to show for one territory (and the rest will follow by the union bound and whp).
Note that $x$ random walks initiated by a candidate will never change their ID.
Since they continue in $\Theta(t_{mix}\log n)$ rounds, the probability of not hitting the broadcast territory,
which contains $\widetilde{\Omega}(xt_{mix}\Phi)$ nodes whp \dk{(by Lemma~\ref{l:broadcast}),}
of a fixed candidate by any of these walks is at most 
$\left(1-\widetilde{\Omega}(xt_{mix}\Phi)/n\right)^x$, 
which is polynomially small 
for $x=\widetilde{\Theta}\left(\sqrt{n\log n/(\Phi t_{mix})}\right)$.
Thus, the hitting holds whp. 
\end{proof}

\begin{theorem}
\label{thm:knownn}
For $x=\widetilde{\Theta}\left(\sqrt{n\log n/(\Phi t_{mix})}\right)$, 
the leader election algorithm elects a unique leader and uses $\widetilde{O}(\sqrt{n t_{mix}/\Phi})$ point to point messages/bits of communication
in the CONGEST model with known (a linear upper bound on) $n$, whp.
It works in time $O(t_{mix}\log^2 n)$. 
\end{theorem}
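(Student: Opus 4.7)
The plan is to combine the two lemmas already established with concentration bounds on the candidate set and a uniqueness argument for the IDs, then verify correctness and aggregate the complexities phase by phase. First I would observe that, since each node becomes a candidate independently with probability $(c\log n)/n$, a Chernoff bound gives that the number of candidates lies in $[(c/2)\log n,\, 4c\log n]$ whp; in particular there is at least one candidate and at most $4c\log n$, so the super-round scheduling of parallel Cautious broadcasts is feasible whp. A standard birthday argument for IDs drawn uniformly from $\{1,\ldots,n^4\}$ yields that all chosen IDs are distinct whp, and hence there is a well-defined candidate $c^*$ of maximum ID.

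Correctness then follows by chaining the two lemmas. By Lemma~\ref{l:broadcast}, each candidate $v$ spans a broadcast territory $T_v$ of size $\widetilde{\Omega}(xt_{mix}\Phi)$ whp. By Lemma~\ref{l:walks}, for the chosen $x$, some random walk carrying the maximum ID visits at least one node in each $T_v$ whp. Because walks monotonically adopt the largest ID they encounter, and each visited node stores the largest walk-ID it has ever seen, the subsequent convergecast along $T_v$ delivers to its root $v$ an ID at least as large as the ID of $c^*$. Thus every non-$c^*$ candidate observes an ID strictly greater than its own and abstains, while $c^*$ never observes anything larger and elects itself, giving a unique leader whp.

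For the time bound, Cautious broadcast takes $O(t_{mix}\log n)$ rounds by Lemma~\ref{l:broadcast}; the random-walk and convergecast phases each take $O(t_{mix}\log n)$ rounds by construction; and the super-round discipline that interleaves the $\widetilde{O}(1)$ parallel Cautious broadcasts adds an $O(\log n)$ factor, giving $O(t_{mix}\log^2 n)$ overall. For the message bound I would sum three contributions: Lemma~\ref{l:broadcast} gives $\widetilde{O}(xt_{mix})$ messages per Cautious broadcast, hence $\widetilde{O}(xt_{mix})$ across the $\widetilde{O}(1)$ candidates; the random-walk phase sends one message per walk-step, totalling $\widetilde{O}(x\,t_{mix})$ over all walks (CONGEST coalescing only reduces the count since IDs collapse to the maximum on each link); and the convergecast contributes, per territory, messages proportional to tree size times the number of distinct updates of the stored maximum at each node, which is $\widetilde{O}(1)$ since only $\widetilde{O}(1)$ candidate IDs are ever in circulation, yielding $\widetilde{O}(xt_{mix}\Phi)\le \widetilde{O}(xt_{mix})$ overall. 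Substituting $x=\widetilde{\Theta}(\sqrt{n\log n/(\Phi t_{mix})})$ produces the claimed $\widetilde{O}(\sqrt{n\,t_{mix}/\Phi})$ bound.

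The main obstacle I anticipate is obtaining the tight convergecast estimate: since convergecast runs for $\Theta(t_{mix}\log n)$ rounds, the crucial accounting step is to charge only the $\widetilde{O}(1)$ actual changes of the stored maximum at each tree node rather than the per-round transmissions; once this is in place, the remaining bounds are immediate substitutions into Lemmas~\ref{l:broadcast} and~\ref{l:walks}, together with a union bound over the at most $\widetilde{O}(1)$ candidate territories to promote the whp guarantees to simultaneous whp guarantees across the whole execution.
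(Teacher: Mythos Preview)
Your proposal is correct and follows essentially the same route as the paper: invoke Lemma~\ref{l:broadcast} for territory size and per-broadcast cost, Lemma~\ref{l:walks} for the max-ID walk hitting every territory, then bound convergecast by the Cautious-broadcast cost and multiply by the $\Theta(\log n)$ candidates, with the super-round interleaving contributing the extra $\log n$ factor in time. You are in fact more explicit than the paper in spelling out the candidate-count Chernoff bound, the ID-uniqueness birthday argument, and the convergecast accounting (where you correctly note that one must charge only the $\widetilde{O}(1)$ updates of the stored maximum per tree node rather than one message per round, a point the paper handles in a single sentence by asserting the cost is ``not bigger than in the Cautious broadcast process'').
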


\begin{proof}
The time complexity of $O(t_{mix}\log^2 n)$, as well as using $O(\log n)$ communication bits per point-to-point message whp,
follow directly from the description of the algorithm; the main contributor to time performance is Cautious broadcast
that has to be pipelined in logarithmic number of time-threads.

By Lemma~\ref{l:walks}, some walk with maximum ID visits some node in each candidate's broadcast territory whp.
This ID is propagated to each candidate via \dk{the convergecast along} its broadcast tree, which takes time and number of messages
not bigger than in the Cautious broadcast process, c.f., Lemma~\ref{l:broadcast}.
Recall that there are $\Theta(\log n)$ candidates, whp, thus the total communication is $\Theta(\log n)$ times
bigger than the performance for \dk{a single execution of Cautious} broadcast stated in Lemma~\ref{l:broadcast}.
Additionally, Lemma~\ref{l:walks} guarantees that exactly one candidate \dk{with biggest ID} is heard by all other candidates whp,
while the number of candidates is $\Theta(\log n)$ whp, therefore the leader election process is correct whp.
\end{proof}

Notice that the factor $\sqrt{n t_{mix}/\Phi}$ is asymptotically smaller than $\sqrt{n} t_{mix}$,
for $t_{mix} = \omega(1/\Phi)$, and never asymptotically larger, due to 
the known bounds $1/\Phi \leq t_{mix} \leq 1/\Phi^2$~\cite{montenegro2006mathematical}. Thus improving over the $O( t_{mix}\sqrt{n} \log^{7/2} n)$ messages in previous work~\cite{GilbertRSpodc2018}.



\section{Unknown Network Size}
\label{sec:unknownN}

\subsection{Impossibility of \sLE}
\label{sec:impossibility}

Recall that an algorithm solves the \sLE problem in time $T(n)$ with probability $p(n)$, where $n$ is an input size,
if for any integer $n>0$ and network $G$ of $n$ nodes, the probability that all nodes stop by time $T(n)$
with the same value and exactly one of them will have a flag raised with that value, is at least $p(n)$.
In this section, \mm{using a probabilistic pumping wheel technique} we prove the following result.

\begin{theorem}
\label{thm:imposs}
For any non-decreasing positive integer function $T(n)$ and any constant $0<c<1$,
there is no algorithm solving \sLE problem in time $T(n)$ with probability $c$, in the setting without known number of nodes $n$.
\end{theorem}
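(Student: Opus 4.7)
The plan is to prove the impossibility by contradiction combined with a probabilistic pumping-wheel argument. First I would assume that some algorithm $\mathcal{A}$ solves \sLE in time $T(n)$ with probability at least $c$ on every anonymous network of $n$ nodes, where $T$ is non-decreasing. The goal is then to exhibit a network on which $\mathcal{A}$ must fail with probability strictly larger than $1-c$, contradicting the hypothesis.

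The adversarial construction is a \emph{wheel} obtained by gluing $k$ identical copies of a \emph{witness} path $W$ of $\ell$ nodes into a cycle $C$ of $N=k\ell$ nodes, with distinguished centers $v_1,\dots,v_k$ (one per copy). The parameters $\ell$ and $k$ will be tuned from $T$ and $c$ at the end. The core step is an information-theoretic pumping bound: since in the \congest model each link carries only $O(\log N)$ bits per round, during the $T(N)$ rounds in which $\mathcal{A}$ runs on $C$ at most $O(T(N)\log N)$ bits can cross either of the two edges incident to any fixed center $v_j$. I would formalise this by coupling the execution on $C$ with an auxiliary execution $C^{(j)}$ obtained by resampling the random coins on one side of $v_j$: the bandwidth bottleneck guarantees that the view on the opposite side of $v_j$ changes by only a negligible statistical distance, hence with large probability that side elects the same leader as before. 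On the other hand, by correctness of $\mathcal{A}$ on $C^{(j)}$ (which is again a cycle of $N$ nodes) the resampled side independently elects its own leader. Hence, whenever the pumping across $v_j$ fails to carry enough bits to distinguish the coupled executions, $C$ itself is forced to end up with at least two leaders, violating uniqueness.

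A union or Chernoff-style bound over the $k$ centers then shows that, for $k$ large enough in terms of $\ell$ and $T(N)$, the probability that the pumping fails at some $v_j$ strictly exceeds $1-c$, contradicting the assumed success probability of $\mathcal{A}$. The hard part I expect is exactly this uniformity in $T$: since $T$ is arbitrary and may grow very fast, one cannot simply take $T(N)\ll N$. The way out is that the total random entropy available to nodes scales with $N$ whereas the cross-center bandwidth scales only as $T(N)\log N$, so by driving $N=k\ell$ large enough the entropy/bandwidth balance tilts against the algorithm no matter how fast $T$ grows. Making the coupling tight enough that the two independently elected leaders genuinely coexist in the original execution of $C$ — rather than being artefacts living in the two coupled cycles separately — and showing that this happens at some witness simultaneously with probability $>1-c$, is where I expect the main technical difficulty to lie.
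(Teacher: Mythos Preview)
Your approach has a genuine gap. The core of your argument is a bandwidth bound: at most $O(T(N)\log N)$ bits cross the edges incident to a center $v_j$ during the $T(N)$ rounds that $\mathcal{A}$ runs on the cycle $C$ of $N$ nodes. But $T$ is an arbitrary non-decreasing function, so nothing prevents $T(N)\ge N$ (indeed $T(N)$ may be doubly exponential in $N$). Once $T(N)$ exceeds the diameter of $C$, every node can in principle learn the full randomness of every other node, and your coupling/statistical-distance argument collapses: resampling the coins on one side of $v_j$ can change the view on the other side arbitrarily. Your attempted rescue, that ``entropy scales with $N$ while bandwidth scales with $T(N)\log N$'', points in the wrong direction --- it is precisely when $T(N)$ is large that the bandwidth dominates the entropy, not the reverse. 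Note also that the theorem is stated (and holds) without any message-size restriction, so an argument hinging on the \congest limit would at best yield a weaker result.

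The idea you are missing is to exploit the behaviour of $\mathcal{A}$ on a \emph{small} instance and transport it to a large one. Fix a small $n$ and use the assumed success on the $n$-cycle $C_n$ to isolate a fixed winning configuration $\Gamma$ that is reached by time $T(n)$ with probability at least $c/2^{nT(n)}$. Now build a huge cycle $C_N$ with $N$ chosen as a function of $n$ and $T(n)$ (crucially, \emph{not} of $T(N)$), and pack it with many $2T(n)$-separated ``witnesses'', each containing two adjacent length-$n$ segments at its core. During the first $T(n)$ rounds the executions inside distinct witnesses are independent, and an inductive indistinguishability argument shows that each segment of a core reproduces $\Gamma$ with probability at least $c/2^{nT(n)}$; hence for $N$ large enough, with probability $>1-c$ some witness has \emph{both} of its core segments stop in configuration $\Gamma$, each raising a leader flag. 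The relevant time horizon is $T(n)$, which is fixed before $N$ is chosen, so the ``$T$ grows too fast'' obstacle you correctly identified simply never arises.
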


\begin{proof}

Suppose to the contrary that such an algorithm exists, call it $\cA$.
Consider an arbitrary positive integer $n$ and a cycle $C_n$ of $n$ nodes and $n$ edges.
Algorithm $\cA$ stops at all nodes of $C_n$ by time $T(n)$ with probability at least~$c$.
Without loss of generality assume that $\cA$ draws one random bit per round of communication. 
(If more random bits per round are used, the same argument can be extended to more outcomes.) 

Consider an execution of $\cA$ on $C_n$ in the first $T(n)$ rounds.
Starting from the initial state where nodes have no information (recall that nodes do not have labels and the network size is unknown), in each round a node makes decisions, based on the random bits drawn and the received states of its neighbors, to move to another state. With respect to the random bits drawn, by time $t$ the node may be in one of $2^t$ states. 
We call the states of all network nodes at a given time a \emph{configuration} of states. 
A configuration in $C_n$ where $\cA$ stops successfully at all nodes electing a leader is called a \emph{winning} configuration. 
By definition of $\cA$, the probability of ending at winning configuration is at least $c$, and that there are $2^{nT(n)}$ possible configurations. Hence, there must exist some winning configuration $\Gamma$ that occurs in $\cA$ with probability at least $c/2^{nT(n)}$.
Denote by $\Gamma_{|t}$ the part of configuration $\Gamma$ by round $t$.

Consider a cycle $C_N$, where $N$ will be defined later, where $\cA$ is executed. 
Let a path of length $2T(n)+2n$ in $C_N$ be called a \emph{witness}, 
the $2n$ nodes in the middle of a witness be called the \emph{core},
and each half of the core of size $n$ be called a \emph{segment}
(see Figure~\ref{fig:witness}).

\begin{figure}[htp]
\begin{center}
\psfrag{n}{$n$ nodes}
\psfrag{T(n)}{$T(n)$ nodes}
\psfrag{segment}{segment}
\psfrag{core}{core}
\psfrag{witness}{witness}
\includegraphics[width=0.5\textwidth]{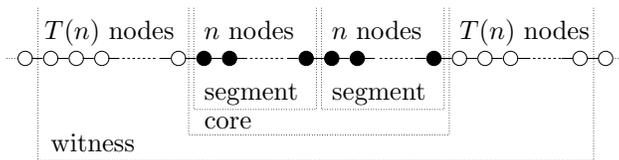}
\caption{Illustration of a witness.}
\label{fig:witness}
\end{center}
\end{figure}

We show now that, for $N$ large enough, after executing $\cA$ on $C_N$ for $T(n)$ rounds, the two segments in the core of some witness have configuration $\Gamma$, hence nodes in both segments stop each with an elected leader, with probability larger than $1-c$, which proves the theorem. 

Let $N$ be a multiple of $4T(n)+2n$ and consider $N/(4T(n)+2n)$ witnesses that are $2T(n)$-separated, that is, they are disjoint and between any pair
of consecutive witnesses there are at least $2T(n)$ nodes that do not belong to any considered witness.
Therefore, during the first $T(n)$ rounds of the execution of $\cA$ on $C_N$, the configurations on witnesses are 
independent.

Consider a single witness. Define a $t$-semi-core of the witness, for $0\leq t\leq T(n)$, as a set containing the core and all nodes of distance 
at most $T(n)-t$ from it; in particular, the $0$-semi-core is the witness itself, and the $T(n)$-semi-core is the core of the witness.
\mm{We prove the following invariant, illustrated in Figure~\ref{fig:invariant}.}
\begin{itemize}
\item[]
\emph{
For any $0\leq t\leq T(n)$, 
\mm{with probability at least $c/2^{nt}$, any node $v$ in the $t$-semi-core at distance $x\leq T(n)-t$ from the core has the same configuration in $\Gamma_{|t}$ as the node $v'$ at distance $x\bmod n$ from the center of the core.} 
} 

\end{itemize}
\begin{figure}[htp]
\begin{center}
\psfrag{b}{$t$-semi-core}
\psfrag{c}{core}
\psfrag{s}{segment}
\psfrag{wit}{witness}
\psfrag{v}{$v$}
\psfrag{w}{$v'$}
\psfrag{t}{$t$}
\psfrag{x}{$x$}
\psfrag{xmod}{$x \mod n$}
\includegraphics[width=\textwidth]{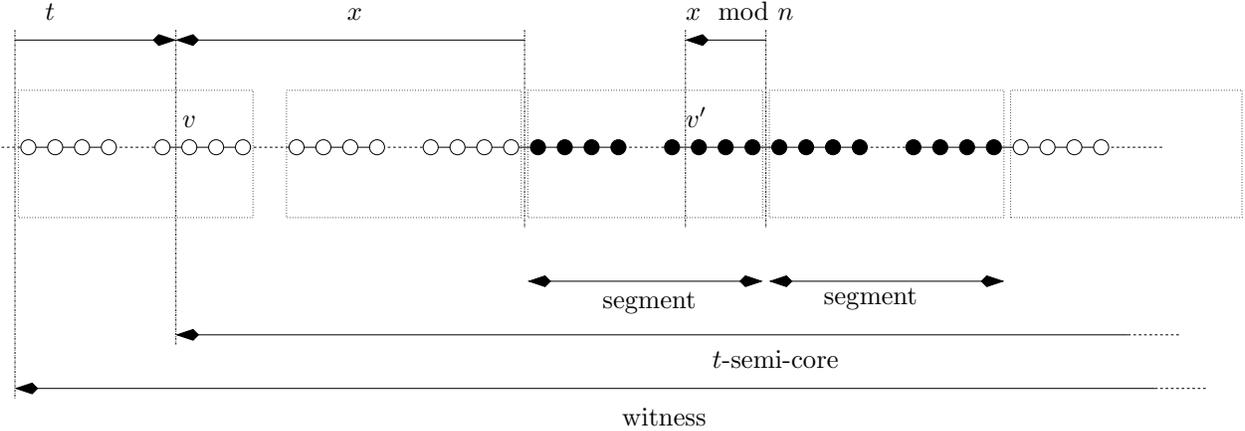}
\caption{Illustration of invariant.}
\label{fig:invariant}
\end{center}
\end{figure}

\mm{We prove the invariant by induction as follows.}
In the beginning all node configurations in the $0$-semi-core are the same. Then, if the invariant holds for some $0\leq t< T(n)$,
then we could extend the configuration of $t$-semi-core at all but the end nodes of this semi-core
to satisfy the invariant for $t+1$; this is because all these nodes receive messages from other nodes satisfying the invariant
for $t$, thus nodes mimicking (in the cyclic order modulo $n$, starting from the middle of the semi-core) the behavior of $\cA$ on $C_n$ at round $t+1$ of the protocol (leading to the partial configuration $\Gamma_{|t+1}$ on cycle $C_n$, and thus on corresponding
nodes of the $(t+1)$-semi-core, cyclically modulo $n$ starting from the middle of the semi-core).

It follows from the invariant for $T(n)$ that $n$ nodes in one segment of the core end up in configuration $\Gamma$,
so as the other $n$ nodes of the other segment of the core. Since $\Gamma$ is a winning configuration on $n$ nodes, each of these two segments stops electing one leader among themselves. This is an event violating the correctness during the considered
execution of $\cA$ on $C_N$ (for $T(n)$~rounds).

It remains to prove that the union of these events, over all the considered $N/(4T(n)+2n)$ witnesses,
holds with probability bigger than $1-c$. 
For $N = x (4T(n)+2n)$, 
we need to show that the complementary event holds with probability smaller than $c$, i.e.,
$\left(1-\left(\frac{c}{2^{nT(n)}}\right)^2\right)^x < c$. This is indeed implied by
$\exp\left(-\frac{c^2}{2^{2nT(n)}}x\right) < c$, which holds for
$x > \frac{\ln (1/c)}{c^2} 2^{2nT(n)}$.
%
%
That is, for $N=\left(1+\frac{\ln (1/c)}{c^2} 2^{2nT(n)}\right) (4T(n)+2n)$, with probability larger than $1-c$ there exists at least one witness in $C_N$ with two winning configurations $\Gamma$ after running $\cA$ for $T(n)$ steps, hence the core nodes stop with two leaders and the claim follows.
\end{proof}
\subsection{\eLEname}

In this section, we present a randomized algorithm that solves the explicit \eLE problem with high probability guarantees. 
\mm{Our algorithm does not use any network information. We show in our analysis its time and message complexity. We also show that if the isoperimetric number of the network graph is known tighter bounds are obtained.}

\mm{
A classic \LE technique is simple: the leader is the node that randomly chooses the smallest ID. The sample space must be some polynomial range on the number of nodes $n$. However, in a model with no network information, i.e. ``blindly'', it is not clear how to guarantee that only one node chooses the smallest ID. To overcome this challenge, in our algorithm nodes probe increasing estimates of network size, which eventually get close enough to $n$ to guarantee uniqueness. 
Each node chooses ID only once, but different nodes may do so for different estimates. 
Thus, to decide leadership each node compounds the chosen ID with the estimate used to choose it. 
The node with smallest ID, among those with largest estimate, is the leader. 
Our analysis shows that some node will not choose ID until the estimate is large enough.
That is, we use the estimate as a ``certificate'' of uniqueness. 
}

\mm{
Our technique to probe each estimate $k$ is based on averaging some node potential values, initially uneven.
That is, we use ``diffusion'' of potentials. 
The technique includes some ``thresholds'' and alarms to detect locally when $k$ is low, 
or otherwise choose ID in a range polynomial on $k$.
Thresholds, alarms, and the length of the diffusion process are functions of $k$, which may fail while $k$ is low.
Consequently, some nodes may choose an ID that may not be unique, and their decision is final.
Nevertheless, our analysis shows that, with large enough probability, until $k$ is close enough to $n$ to guarantee unique IDs, some nodes do not choose ID.
In other words, by the time all nodes have chosen ID, there is exactly one with smallest ID among those with largest estimate.
} 

\mm{In the following paragraphs we further detail \eLEname.} 
For clarity, we present the pseudocode in two parts as Algorithms~\ref{alg:unknownN} and~\ref{candAlg}. 
Thus, references to algorithm lines are given as $\langle algorithm\#\rangle.\langle line\#\rangle$.
\mm{The details of communication to fulfill the restrictions of the \congest model are left to the end for clarity.}

\mm{
The algorithm is structured in synchronous iterations over estimates of network size $k=2^i$ for $i=1,2,\dots$.
Each iteration corresponding to one estimate $k$, which we call iteration $k$, is conceptually divided in a \emph{\try phase} (Line~\ref{alg:unknownN}.\ref{itertry}) followed by a \emph{\dec phase} (Line~\ref{alg:unknownN}.\ref{idchoicedecision}).
}

\begin{description}
\item [\Try:]
\mm{
During the \try phase of an iteration $k$ nodes check whether $k$ is still too low with respect to $n$ to choose ID with various detection methods (described below). 
If a node $x$ fails to detect, 
either because $k$ is already large enough, 
or because $x$ does not detect $k$ as low and does not hear from other nodes detecting,
then $x$ chooses ID and stores it together with the value of $k$ used in this iteration (Line~\ref{alg:unknownN}.\ref{idchoice}).
}

\mm{
The \try phase is executed $f(k)$ times (for a function $f(\cdot)$ convenient for our analysis).
Using repeated executions is one of our detection methods, because at the beginning of each of these executions each node chooses to be a \emph{\cand} node with some probability $p(k)$ or \emph{\notcand} node with probability $1-p(k)$ (Line~\ref{alg:unknownN}.\ref{candchoice}), where $p(k)$ is chosen as convenient for our analysis.
Then, nodes inform their neighbors of their condition and, if in at least half of the executions a node $x$ hears from some \cand node, then $k$ is detected as low and $x$ does not choose ID for the current $k$ (Line~\ref{alg:unknownN}.\ref{idchoicedecision}).
} 

\mm{
After color choice, each execution of the \try phase is conceptually further divided in a  \emph{\dist phase} (Line~\ref{candAlg}.\ref{roundsleaderModif}) followed by a \emph{\diss phase} (Line~\ref{candAlg}.\ref{dissloop}).
}

\begin{description}
\item[\Dist:]
\mm{
During the \dist phase nodes share fractions of some potential values (Line~\ref{candAlg}.\ref{candPot}), which are initialized to $1$ for \notcand nodes and to $0$ otherwise (Line~\ref{candAlg}.\ref{initpot}). They also share their detection status, color status, etc. (Line~\ref{candAlg}.\ref{broadcast}.)
The number of rounds of \dist as well as the fractions are functions of $k$ as needed for the analysis.
Two other low-$k$ detection methods are used during \dist. 
Namely, having with too many neighbors with respect to $k$, or receiving from a neighbor that has detected low $k$  (Line~\ref{candAlg}.\ref{toomany}). 
Our last detection method is implemented at the end of the \dist phase: if the potential of a node $x$ is larger than some threshold $\tau(k)$, for a convenient function $\tau(\cdot)$, then $x$ detects $k$ as low  (Line~\ref{candAlg}.\ref{toomuch}). 
}

\item[\Diss:]
\mm{
Finally, in the \diss phase nodes broadcast their whole status. Namely, their color status and detection status, and the ID and estimate of the elected leader so far, if any (Line~\ref{candAlg}.\ref{statusbroad}). Local variables are updated as needed by the received status (Lines~\ref{candAlg}.\ref{updatebegin} to~\ref{candAlg}.\ref{updateend}). 
This broadcast runs for a number of rounds large enough to reach the whole network, should $k$ be close enough to $n$. 
Our analysis shows that at least one node chooses ID using a large enough $k$. Hence, all nodes in the network receive such information and know which is the leader. 
}
\end{description}

\item[\Dec:]
\mm{
In the \dec phase of an iteration $k$, each node $x$ that did not choose ID yet and did not detect $k$ as low in the \try phase chooses an ID independently and uniformly at random in a polynomial range on $k$ (Line~\ref{alg:unknownN}.\ref{idchoice}). The chosen ID is conceptually compounded with the value of $k$ used to choose it to be used as a certificate. Node $x$ stores the ID and certificate of the leader from $x$'s perspective. Initially its own, updating it as soon as $x$ receives a larger certificate or the same certificate with a smaller ID (Lines~\ref{candAlg}.\ref{updateminID} and~\ref{candAlg}.\ref{minID}). 
A boolean indicator of $x$ leadership is maintained accordingly (Line~\ref{alg:unknownN}.\ref{flag}). 
Node $x$ may have the leadership flag true for some time, but if $x$ certificate ($k$ used when ID was chosen) is low, at some point it will hear of a larger certificate from another node and will update the variables accordingly. 
}

\end{description}

\mm{
It remains to explain how the communication needed by the algorithm is implemented under the restrictions of the \congest model. 
Color status and detection status use $O(1)$ bits, whereas ID chosen and estimate to choose it need $O(\log n)$ bits. 
Potentials on the other hand need $\omega(\log n)$ bits in some rounds of communication. Hence, they are transmitted bit by bit. We leave out of the pseudocode this detail for clarity, but we take it into account in the analysis.  
}

\begin{algorithm*}[htbp]
\caption{\eLEname. Algorithm for each node. 
$f(k) = (4\sqrt{2} /(\sqrt{2}-1)^2) \ln (k^{1+\epsilon}/\xi)$,
$p(k) = \ln 2/k^{1+\epsilon}$,
\mm{and $0<\epsilon\leq1$.}
}
\label{alg:unknownN}
\DontPrintSemicolon
\SetKwRepeat{Do}{do}{while}
\SetKwBlock{Repeat}{repeat}{}
\tcp{global variables}
$k \gets 1$ \tcp*{running estimate of network size}
$leader \gets false$\tcp*{leader flag}
$id\gets nil$,
$K \gets nil$ \tcp*{id chosen and estimate when id chosen}
$id_{ldr}\gets nil$,
$K_{ldr} \gets nil$ \tcp*{id chosen and estimate when id chosen by leader}
	$status \gets$ new array of size $f(k)$ \tcp*{final status of each iteration} 
	$empty\gets$ new array of size $f(k)$ \tcp*{no \cand node detected flag of each iteration}
\Repeat{ \label{iter}
	$k \gets 2k$\;
	\tcp{\try phase}
	\For{$i = 1$ to $f(k)$}{ \label{itertry}
		$
		color \gets
		\begin{cases}
		\cand \textrm{ with probability $p(k)$,}\\ 
		\notcand \textrm{ with probability $1-p(k)$.}\\
		\end{cases}
		$\label{candchoice}\;
		$\langle q,c\rangle \gets$ \CMC$(color)$ \label{MMCcall}\; 
			$status[i]\gets q$\;
			$empty[i]\gets \lnot c$\;
	}
	\tcp{decision phase}
	\If{$id=nil$ {\bf and} \# of $true$ in $empty > f(k)/2$ {\bf and} \# of $probing$ in $status > 0$}{\label{idchoicedecision}
		$id         \gets$ choose an integer in 
		\mm{$[1,k^{4(1+\epsilon)}\log^4(4k)]$}
		uniformly at random, \label{idchoice}
		$K          \gets k$\;
		$id_{ldr} \gets id$,
		$K_{ldr}  \gets K$\label{update}\;
	}
	$leader \gets (K_{ldr} = K) \land (id_{ldr} = id)$\tcp*{update leader flag}\label{flag}
}
\end{algorithm*}


\begin{algorithm*}[htbp]
\caption{\CMC algorithm for each node. 
$N$ is the set of neighbors of this node.
$\vec\Phi,\vec q, \vec c, \vec{id_{ldr}}, \vec{K_{ldr}}$ are vectors such that $\Phi_i, q_i, c_i, id_{ldr_{i}}, K_{ldr_i}, \forall i\in N$.
\mm{$r(k)=(8k^{2(1+\epsilon)}/i(G)^2) \log (k^{2(1+\epsilon)}) + k^{1+\epsilon}\log (2k)$,}
$\tau(k) = 1-1/(k^{1+\epsilon}-1)$,
\mm{and $0<\epsilon\leq1$.}
For each link, transmissions of potentials are done one bit at a time in both directions. 
}
\label{candAlg}
\DontPrintSemicolon
\SetKwFunction{KwFn}{\CMC}
\SetKwProg{Fn}{Function}{}{end}
\Fn{\KwFn{$color$}}{
	$c \gets color=\cand$ \label{leaderinitp} \tcp*{existence of \cand nodes}
	$q\gets probing$ \tcp*{status: probing|low}
	\leIf(\tcp*[f]{potential}){\cand}{$\Phi\gets 0$}{$\Phi\gets 1$} \label{initpot}
	\tcp{\dist phase}
	\For{$round=1$ to $r(k)$}{  \label{roundsleaderModif}
		Broadcast $\langle\Phi,q,c,id_{ldr},K_{ldr} \rangle$ and Receive $\langle \vec\Phi,\vec q, \vec c, \vec{id_{ldr}}, \vec{K_{ldr}} \rangle$\;\label{broadcast}
		\If{$q=probing$ {\bf and} $|N|\leq k^{1+\epsilon}$ {\bf and} $\forall i\in N:q_i=probing$}{ \label{toomany}
			$\Phi\gets \Phi + \sum_{i\in N}\Phi_i/(2k^{1+\epsilon}) - |N|\Phi/(2k^{1+\epsilon})$ \label{candPot}	
		}
		\lElse{ \label{leadertoomany}
			$q\gets low$, \label{alarminsecondleader}
			$\Phi\gets 1$
		}
		\For{each $i\in N : id_{ldr_i}\neq nil$}{\label{updateminID}
			\lIf{$K_{ldr_i} > K_{ldr}$} {$id_{ldr} \gets id_{ldr_i}$, $K_{ldr} \gets K_{ldr_i}$}		
			\lElseIf{$K_{ldr_i} = K_{ldr}$ {\bf and} $id_{ldr_i} < id_{ldr}$}{$ id_{ldr} \gets id_{ldr_i}$}
		}
	} 
	\lIf{$\Phi> \tau(k)$}{ \label{toomuch}
		$q\gets low$,
		$\Phi\gets 1$ \label{candthreshold}
	} 
	\tcp{\diss phase}
	\For{$round=1$ to $k^{1+\epsilon}$}{  \label{dissloop}
		Broadcast $\langle q, c, id_{ldr}, K_{ldr} \rangle$ and Receive $\langle \vec q, \vec c, \vec{id_{ldr}}, \vec{K_{ldr}} \rangle$ \label{statusbroad}\;
		\For{each $i\in N$}{\label{updatebegin}
			\lIf{$q_i = low$}{$q\gets low$\label{lowdiss}}
			\lIf{$c_i = true$}{$c\gets true$}
		}
		\For{each $i\in N : id_{ldr_i}\neq nil$}{ \label{minID}
			\lIf{$K_{ldr_i} > K_{ldr}$} {$id_{ldr} \gets id_{ldr_i}$, $K_{ldr} \gets K_{ldr_i}$}		
			\lElseIf{$K_{ldr_i} = K_{ldr}$ {\bf and} $id_{ldr_i} < id_{ldr}$}{$ id_{ldr} \gets id_{ldr_i}$}
		}\label{updateend}
	} 
	\Return{$\langle q,c\rangle$}
}
\end{algorithm*}



\subsection{Analysis}

In this section, we analyze our \eLE algorithm for unknown $n$. 
Let $\ell$ be the number of \cand nodes after the random choices in Line~\ref{alg:unknownN}.\ref{candchoice}. That is, $0\leq \ell \leq n$.
Throughout the analysis, we use that for $0<x<1$ it is $\exp(-x/(1-x)) \leq 1-x \leq \exp(-x)$~\cite[\S 2.68]{book:mitrinovic}.
We denote as $\lceil\lceil x \rceil\rceil$ the smallest power of $2$ that is larger than $x$. 

%

The potential of nodes at the beginning of a round $r$ of the \dist phase of \CMC is denoted as a row vector $\vec{\Phi}_{r}$, 
where $\Phi_{r}[i]$ is the potential of node $i\in\{0,1,2,\dots,n-1\}$
(nodes are labeled only for the analysis),
and $||\vec\Phi_r|| = \sum_{i\in V} \Phi_{r}[i]$. 
The subindex $r$ is dropped when it is clear from context. 

\mm{For any given estimate $k$,} the fractions of potential shared are round independent (cf. Line~\ref{candAlg}.\ref{candPot}). 
Thus, the evolution of potentials during the \dist phase of \CMC can be characterized by a matrix 
$S=(s_{ij})_{i,j\in V}$, where
$s_{ij} = 1/(2k^{1+\epsilon})$ if $i\neq j$ and
$s_{ii} = 1-|N_i|/(2k^{1+\epsilon})$, where $N_i$ is the set of neighbors of node $i$, \mm{and $0<\epsilon\leq 1$.}
Consider the vector of potentials $\vec{\Phi}_{1}$ held by nodes at the beginning of a \dist phase where $k^{1+\epsilon}\geq |N_i|$ for all $i\in V$.
Then, for round $r>0$, $\vec{\Phi}_{r} = \vec{\Phi}_{1}S^{r}$ is the vector of potentials at the beginning of round $r$. 

Given that $S$ is stochastic, the characterization above can be seen as a Markov chain ${\bf X}$ where the state space is $V$ and the transition matrix is $S$. 
Thus, for a \dist phase where $k^{1+\epsilon}$ is larger than the number of neighboring nodes, we analyze the evolution of potentials leveraging previous work on convergence of Markov chains (e.g. for load balancing~\cite{RSW:loadBalancing, GM:loadBal}, gossip-based aggregates~\cite{BGPSgossip,KDGgossip}, and mass-distribution~\cite{FMT:aggJournal}). 

The main result of this section is the following. 
\begin{theorem}
\label{thm:unknownN}
\mm{For $0<\epsilon\leq 1$} and $0<\xi<1$, 
after running Algorithm~\ref{alg:unknownN} on a network with $n>1$ nodes
\begin{align*}
r(k) &= \mm{\frac{8k^{2(1+\epsilon)}}{i(G)^2} \log (k^{2(1+\epsilon)}) + k^{1+\epsilon}\log (2k),}
\ & \
p(k) &= \frac{\ln 2}{k^{1+\epsilon}}\ ,\\
\tau(k) &= 1-\frac{1}{k^{1+\epsilon}-1}\ , \ & \ 
f(k) &= \frac{4\sqrt{2}\ln (k^{1+\epsilon}/\xi) }{\left(\sqrt{2}-1\right)^2} \ ,
\end{align*}
the explicit \eLE problem is solved with probability at least $1-1/n^{\log(8/5)}-2\xi$, with 
\mm{$O(\frac{n^{4(1+\epsilon)}}{i(G)^2}\log^5 n)$ time and $O(\frac{n^{4(1+\epsilon)}}{i(G)^2} m \log^5 n)$ messages,}
where $m$ is the number of links.
\end{theorem}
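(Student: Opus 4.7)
The plan is to establish both correctness and complexity by tracking the algorithm across its doubling iterations $k = 2, 4, 8, \ldots$ and identifying a critical iteration $k^{\ast}$ at which every not-yet-assigned node simultaneously picks an ID. Correctness will follow from three claims: (i) at every iteration $k < k^{\ast}$ no node commits to an ID, because at least one of the three low-estimate detection mechanisms of Lines~\ref{candAlg}.\ref{toomany}, \ref{candAlg}.\ref{toomuch}, and~\ref{alg:unknownN}.\ref{idchoicedecision} fires; (ii) at $k^{\ast}$ all remaining nodes pick IDs independently from the range $[1, k^{\ast 4(1+\epsilon)}\log^4(4k^{\ast})]$, so the minimum ID among those holding the largest certificate is unique with probability at least $1 - 1/n^{\log(8/5)}$; and (iii) following $k^{\ast}$, the \diss phase of $k^{1+\epsilon} \geq D$ rounds propagates $(id_{ldr},K_{ldr})$ to every node, stabilizing the leader flag at exactly one node.

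The main technical content is a combined Chernoff and Markov-chain argument. For the decision condition, the count $\ell$ of \cand nodes per \try repetition is $\mathrm{Bin}(n, p(k))$ with $p(k)=\ln 2/k^{1+\epsilon}$. For $k^{1+\epsilon}$ moderately exceeding $n$, both $\Pr[\ell = 0] > 1/2 + \Omega(1)$ and $\Pr[\ell \geq 1] = \Omega(1)$, so over $f(k) = \Theta(\log(k^{1+\epsilon}/\xi))$ independent repetitions Chernoff bounds with the constant $4\sqrt{2}/(\sqrt{2}-1)^2$ tuned to the relevant deviation give ``\#empty $> f(k)/2$'' and ``at least one \emph{probing} iteration'' each up to failure probability $\xi$, explaining the $2\xi$ term in the theorem. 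For the potential dynamics inside a single \try repetition, the \dist phase is modeled by the Markov chain with symmetric doubly-stochastic matrix $S$; by the definition in Section~\ref{sec:model}, its conductance satisfies $\phi(S) \geq i(G)/(2k^{1+\epsilon})$. Applying a Cheeger-type mixing bound $\|\vec\Phi_r - \bar\Phi\vec 1\|_\infty \leq n(1 - \phi(S)^2/2)^r$ together with the chosen $r(k) = 8k^{2(1+\epsilon)}/i(G)^2\log(k^{2(1+\epsilon)}) + k^{1+\epsilon}\log(2k)$ forces each potential to within $1/(2n)$ of the average $(n-\ell)/n$. Since the gap $\tau(k) - (n-\ell)/n$ is at least $1/(2n)$ whenever $\ell\geq 1$ and $k^{1+\epsilon} \geq n+1$, the threshold check in Line~\ref{candAlg}.\ref{toomuch} does not trigger and $q$ stays \emph{probing}; conversely, a too-small $k$ makes one of the three low-tests fail, ensuring no premature ID choice.

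The hardest step will be the diffusion analysis inside the \congest bit budget: potentials must be represented to precision on the order of $1/k^{1+\epsilon}$, requiring $O(\log k)$ bits per value and thus bit-serial transmission along each link, which the additive $k^{1+\epsilon}\log(2k)$ term in $r(k)$ is designed to absorb without disturbing the Cheeger convergence bound. Once these ingredients are in place, $k^{\ast}$ is fixed as the smallest doubling estimate past the detection threshold, and uniqueness of the minimum ID at the largest certificate follows from a pairwise union bound over the $\binom{n}{2}$ potential collisions in the chosen range, yielding the $1/n^{\log(8/5)}$ error after matching constants. Summing $f(k)(r(k)+k^{1+\epsilon})$ over the $O(\log k^{\ast})$ doubling iterations gives time dominated by the last iteration at $\widetilde{O}(k^{\ast 2(1+\epsilon)}/i(G)^2) = \widetilde{O}(n^{4(1+\epsilon)}/i(G)^2)$ after the bound on $k^{\ast}$ is substituted; multiplying by the $m$ links used per round yields the stated message bound.
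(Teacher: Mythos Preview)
Your proposal contains a genuine structural gap in claim~(i). You assert that for every $k<k^{\ast}$ \emph{no} node commits to an ID, because one of the low-$k$ detection mechanisms fires. The paper does not prove this, and in fact it is not true: when $k$ is small, the \diss phase lasts only $k^{1+\epsilon}\ll n$ rounds, so a node far from all \cand nodes and from any node that raised a low-alarm may well see ``\#empty $>f(k)/2$'' and ``status $=probing$'' and therefore choose an ID prematurely (with a possibly non-unique value). The whole point of the certificate pair $(id,K)$ and of the update rule in Lines~\ref{candAlg}.\ref{updateminID},~\ref{candAlg}.\ref{minID} is precisely to tolerate such early commitments: they are eventually overridden by any node holding a larger $K$. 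What the paper actually establishes (its Lemma~\ref{lemma:somechooselater}) is only that \emph{some} node refrains from choosing until $k^{1+\epsilon}\log(4k)\ge n$; this is proved via a first-moment/Markov argument on the number of nodes that commit early, not by showing the count is zero. Your three-step plan therefore misses the key idea and would fail as stated.

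Two smaller points compound this. First, the $1/n^{\log(8/5)}$ error is not the ID-collision probability alone; in the paper it absorbs both the birthday term $1/(2n^2)$ from the range $[1,k^{4(1+\epsilon)}\log^4(4k)]\supseteq[1,n^4]$ and the $1/n$ failure of Lemma~\ref{lemma:allenoughempties} (the constant $\log(8/5)$ is exactly what makes $1/(2n^2)+1/n\le 1/n^{\log(8/5)}$ tight at $n=2$). Second, your time bound $\widetilde{O}(k^{\ast\,2(1+\epsilon)}/i(G)^2)$ undercounts: the bit-serial transmission of potentials makes round~$j$ of the \dist phase cost $\Theta(j\log k)$ communication rounds, so the per-iteration cost is $\Theta(r(k)^2\log k)$, not $\Theta(r(k))$; the additive $k^{1+\epsilon}\log(2k)$ term in $r(k)$ is there for the reach argument in Lemma~\ref{lemma:somechooselater}, not to absorb the serialization cost. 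Summing $\sum_j j$ up to $r(k)$ is what produces the $n^{4(1+\epsilon)}/i(G)^2$ exponent in the theorem.
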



\mm{The following corollary is a direct consequence of the theorem and the $i(G)\geq 2/n$ lower bound on the isoperimetric number.}

\mm{
\begin{corollary}
\label{cor:unknownN}
\mm{For $0<\epsilon\leq 1$} and $0<\xi<1$, 
after running Algorithm~\ref{alg:unknownN} on a network with $n>1$ nodes
\begin{align*}
r(k) &= 2k^{2(2+\epsilon)} \log (k^{2(1+\epsilon)}) + k^{1+\epsilon}\log (2k),
\ & \
p(k) &= \frac{\ln 2}{k^{1+\epsilon}}\ ,\\
\tau(k) &= 1-\frac{1}{k^{1+\epsilon}-1}\ , \ & \ 
f(k) &= \frac{4\sqrt{2}\ln (k^{1+\epsilon}/\xi) }{\left(\sqrt{2}-1\right)^2} \ ,
\end{align*}
the explicit \eLE problem is solved with probability at least $1-1/n^{\log(8/5)}-2\xi$, with 
$O(n^{4(2+\epsilon)}\log^5 n)$ time and $O(n^{4(2+\epsilon)} m \log^5 n)$ messages,
where $m$ is the number of links.
\end{corollary}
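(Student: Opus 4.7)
}
The plan is to obtain the corollary as a direct specialization of Theorem~\ref{thm:unknownN} by replacing the unknown quantity $i(G)$ with a universal lower bound that holds for every connected $n$-node graph. The key fact I would invoke is that $i(G)\ge 2/n$ whenever $G$ is connected: if $S\subseteq V$ is a proper nonempty subset with $|S|\le n/2$, then connectivity forces $|\partial S|\ge 1$, so $|\partial S|/|S|\ge 2/n$. This immediately yields $1/i(G)^{2}\le n^{2}/4$.

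Next I would reconcile the two versions of $r(k)$. Theorem~\ref{thm:unknownN} uses $r(k)=(8/i(G)^{2})\,k^{2(1+\epsilon)}\log(k^{2(1+\epsilon)})+k^{1+\epsilon}\log(2k)$, which the algorithm cannot evaluate without knowing $i(G)$. Since $n$ is also unknown, I would substitute not $n^{2}/4$ but the running estimate $k^{2}/4$ in place of $1/i(G)^{2}$, giving precisely the corollary's $r(k)=2k^{2(2+\epsilon)}\log(k^{2(1+\epsilon)})+k^{1+\epsilon}\log(2k)$. The point to verify is that this substitution is conservative from the first iteration with $k\ge n$ onward: at such iterations $k^{2}/4\ge n^{2}/4\ge 1/i(G)^{2}$, so the corollary's $r(k)$ is no smaller than the theorem's $r(k)$, and the diffusion phase performs at least as many rounds as the one analyzed in Theorem~\ref{thm:unknownN}. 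Because the doubling schedule $k\gets 2k$ guarantees that some iteration falls in the interval $[n,2n)$, and because the correctness argument of Theorem~\ref{thm:unknownN} proceeds from the first iteration satisfying $k\ge n$, the success-probability guarantee $1-1/n^{\log(8/5)}-2\xi$ transfers verbatim. The other algorithmic parameters $p(k),\tau(k),f(k)$ are unchanged, so no other part of the analysis is affected.

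Finally, for the complexity bounds I would substitute $1/i(G)^{2}\le n^{2}/4$ directly into the expressions of Theorem~\ref{thm:unknownN}: the time bound $O(n^{4(1+\epsilon)}/i(G)^{2}\cdot\log^{5}n)$ becomes $O(n^{4(1+\epsilon)+2}\log^{5}n)=O(n^{6+4\epsilon}\log^{5}n)$, which is in turn $O(n^{4(2+\epsilon)}\log^{5}n)$ since $n^{6+4\epsilon}\le n^{8+4\epsilon}$ for $n\ge 1$; the same substitution into the message bound yields $O(n^{4(2+\epsilon)}m\log^{5}n)$. This matches the claimed asymptotics.

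The only subtlety I anticipate is the ``direction'' of the substitution: for iterations with $k<2/i(G)$ the corollary's $r(k)$ is smaller than the theorem's, so one must argue that correctness is not lost there. This is handled by observing that the revocable nature of \eLE tolerates wrong (temporary) choices during low-$k$ iterations, and that the certificate/maximum-$K$ mechanism of Algorithm~\ref{alg:unknownN} (Lines~\ref{candAlg}.\ref{updateminID} and \ref{candAlg}.\ref{minID}) ensures any spurious leader elected while $k<n$ is overridden by the unique leader produced at the first iteration with $k\ge n$, where the substitution is already conservative and Theorem~\ref{thm:unknownN} applies in full.
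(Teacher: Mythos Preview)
Your approach matches the paper's: the corollary is stated there as ``a direct consequence of the theorem and the $i(G)\geq 2/n$ lower bound on the isoperimetric number,'' with no further argument. Your explicit substitution $1/i(G)^{2}\mapsto k^{2}/4$ to recover the corollary's $r(k)$, and your observation that $n^{6+4\epsilon}\le n^{8+4\epsilon}$ so the complexity bounds follow, correctly fill in what the paper leaves implicit.

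One remark on the subtlety you raise in your last paragraph. The theorem's proof invokes Lemma~\ref{lemma:nolowalarm} already from $k^{1+\epsilon}\ge 2n+1$, not from $k\ge n$ as you write, and for $\epsilon>0$ the former threshold is reached at a strictly smaller $k$; so your ``conservative from $k\ge n$'' claim does not by itself cover the full range where that lemma is used. The cleaner way to close this is to note that the hypotheses of Lemmas~\ref{lemma:somechooselater} and~\ref{lemma:somenonempty} do not involve $i(G)$ and are satisfied by the corollary's $r(k)$ verbatim, while a shortfall in the diffusion length required by Lemma~\ref{lemma:nolowalarm} can only cause a node to trip the $\tau(k)$ alarm and \emph{defer} choosing an ID, never to choose one prematurely. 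Hence correctness is preserved; only the iteration by which all nodes have chosen may be pushed to a larger $k$, and the corollary's looser $O(n^{4(2+\epsilon)}\log^5 n)$ bound already absorbs this. The paper itself does not address this point.
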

}

%
To prove Theorem~\ref{thm:unknownN} we first need to prove a series of intermediate results. 
\mm{
The general structure is the following.
Lemma~\ref{lemma:correct} shows that the \dist process converges to the average potential at each node. That is, for any arbitrarily small $\gamma>0$, after nodes carry on the diffusion for enough time, all nodes have a potential with relative error $\gamma$. 
Lemma~\ref{lemma:time} upper bounds the time for such convergence. The upper bound is a function of the conductance of the matrix underlying the diffusion process, the network size, and the relative error. 
Lemmas~\ref{lemma:nolowalarm}
and~\ref{lemma:allenoughempties}
show that two of our detection methods are correct. 
Namely, when the estimate is close to the network size and there are some \cand nodes, 
the potential of all nodes is below some threshold after \dist (Lemma~\ref{lemma:nolowalarm}),
and whp at least half of the iterations in the \try phase do not have \cand nodes (Lemma~\ref{lemma:allenoughempties}). 
Lemmas~\ref{lemma:somechooselater} and~\ref{lemma:somenonempty} show the correctness of our \try method.
That is, that until the estimate is close to the network size, there is always some node that does not choose ID
(Lemma~\ref{lemma:somechooselater}),
and that after the estimate is close to the network size, there is some iteration with some \cand node detected
(Lemma~\ref{lemma:somenonempty}). The latter lemma implies that potentials are not above threshold, which together with Lemma~\ref{lemma:somechooselater} implies that some node will choose ID within the appropriate range, as we argue in Theorem~\ref{thm:unknownN}.
Both lemmas relate the number of iterations of the \try phase with the probability of error.
}

First, we show that under the above conditions each node converges to the average over the whole network. 
\begin{lemma}
\label{lemma:correct}
Consider a \dist phase of \CMC where $k^{1+\epsilon}\geq |N_i|$ for all $i\in V$, \mm{and $0<\epsilon\leq1$}. 
For any $\gamma>0$ there exists $r(\gamma)\geq 0$ such that,
for any $r\geq r(\gamma)$ and all $i\in V$,~it~is
$$\frac{\left| {\Phi}_{r}[i] - ||\vec{\Phi}_{1}|| / n \right|}{||\vec{\Phi}_{1}|| / n} \leq \gamma \ .$$ 
\end{lemma}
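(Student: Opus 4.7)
The first step is to recognize that the potential evolution $\vec{\Phi}_r = \vec{\Phi}_1 S^r$ is driven by the matrix $S$ whose entries are $s_{ij}=1/(2k^{1+\epsilon})$ for neighbors $i\neq j$, $s_{ii}=1-|N_i|/(2k^{1+\epsilon})$, and $s_{ij}=0$ otherwise. I would first verify that $S$ is symmetric (hence doubly stochastic, since each row sums to $1$) and that the hypothesis $k^{1+\epsilon}\geq |N_i|$ forces $s_{ii}\geq 1/2$ at every node. Writing $S=(I+T)/2$ with $T=2S-I$ exhibits $T$ as another symmetric doubly stochastic matrix, so the spectrum of $T$ lies in $[-1,1]$ and therefore the spectrum of $S$ lies in $[0,1]$; in particular, no eigenvalue of $S$ equals $-1$. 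Combined with connectedness of $G$, this makes the associated Markov chain irreducible and aperiodic, hence ergodic, with uniform stationary distribution $(1/n)\vec{1}$.

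Next, I would apply the spectral theorem to diagonalize $S$ in an orthonormal eigenbasis $v_1,\dots,v_n$ with eigenvalues $1=\lambda_1>\lambda_2\geq\cdots\geq\lambda_n\geq 0$, where $v_1=(1/\sqrt{n})\vec{1}$. Because $S$ is doubly stochastic, the total potential is preserved, $||\vec{\Phi}_r||=||\vec{\Phi}_1||$ for every $r$. Expanding $\vec{\Phi}_1=\sum_i a_i v_i^T$ in this basis gives $a_1=||\vec{\Phi}_1||/\sqrt{n}$, and applying $S^r$ yields
$$\vec{\Phi}_r \;=\; \frac{||\vec{\Phi}_1||}{n}\,\vec{1}^{\,T} \;+\; \sum_{i\geq 2}\lambda_i^{\,r}\, a_i\, v_i^T,$$
so that the per-coordinate absolute error satisfies $\big|\Phi_r[i] - ||\vec{\Phi}_1||/n\big|\leq \lambda_2^{\,r}\,||\vec{\Phi}_1||_2$ for every $i\in V$ and every $r$.

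Finally, I would convert this absolute bound into the relative form required by the lemma. If every node is initially \cand then $\vec{\Phi}_1$ is the zero vector and $\vec{\Phi}_r$ remains zero for all $r$, so the statement holds trivially (the bound $0\leq \gamma\cdot 0$ is interpreted as an equality). Otherwise at least one node starts with potential $1$, so $||\vec{\Phi}_1||\geq 1$, $||\vec{\Phi}_1||/n\geq 1/n$, and $||\vec{\Phi}_1||_2\leq\sqrt{n}$; it is then enough to pick $r(\gamma)$ large enough that $\lambda_2^{\,r(\gamma)}\sqrt{n}\leq\gamma/n$. Since $\lambda_2<1$ by ergodicity and $\lambda_2^{\,r}$ is non-increasing in $r$, such an $r(\gamma)$ exists and works for every $r\geq r(\gamma)$.

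The only delicate step I anticipate is ruling out $\lambda_2=1$ cleanly, i.e.\ establishing the spectral gap rigorously; this is what the laziness bound $s_{ii}\geq 1/2$ buys us, since it gives aperiodicity for free and simultaneously places the spectrum in $[0,1]$. The sharper quantitative control on $r(\gamma)$ in terms of the graph's isoperimetric constant, needed elsewhere in the analysis, is deferred to Lemma~\ref{lemma:time}.
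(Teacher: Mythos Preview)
Your argument is correct. Both your proof and the paper's rest on the same foundation: $S$ is symmetric and doubly stochastic, the condition $k^{1+\epsilon}\geq |N_i|$ gives $s_{ii}\geq 1/2$ so the chain is aperiodic, connectedness of $G$ gives irreducibility, and the stationary distribution is uniform. From there the two diverge in execution. The paper invokes the Markov convergence theorem abstractly to obtain, for any $\gamma>0$, an entrywise bound $(1-\gamma)/n\leq S^r[j][i]\leq (1+\gamma)/n$ valid for all $j,i$, and then bounds $\Phi_r[i]=\sum_j \Phi_1[j]S^r[j][i]$ by linearity; this route never needs to look at $\vec{\Phi}_1$ beyond its $\ell_1$-norm and sidesteps the degenerate case $\vec{\Phi}_1=0$ (the relative error is then $0/0$ and is simply not considered). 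Your route is the direct spectral one: diagonalize $S$, peel off the top eigenvector to isolate the mean, and bound the remainder by $\lambda_2^{\,r}\|\vec{\Phi}_1\|_2$. This is more quantitative and essentially already contains the mixing-time estimate that the paper postpones to Lemma~\ref{lemma:time}; the price is that you must separately control $\|\vec{\Phi}_1\|_2$ and $\|\vec{\Phi}_1\|$, which is why you need the case split on whether some node is initially black. Your handling of that edge case is in fact more careful than the paper's.
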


\begin{proof}
If $k^{1+\epsilon}\geq |N_i|$ for all $i\in V$, the evolution of the \dist phase of \CMC can be characterized by a finite, irreducible Markov chain ${\bf X}$ with transition matrix $S$ as defined above. 
Given that the underlying graph has self-loops, ${\bf X}$ is aperiodic.
Then, by the fundamental theorem of Markov chains~\cite{book:motwani}, ${\bf X}$ is ergodic and it has a unique stationary distribution. 
Since $S$ is doubly stochastic, the system $\vec\pi=\vec\pi S$ admits the solution $\vec\pi=(1/n\dots 1/n)$.

Let $\vec\mu_r$ be the distribution at the beginning of round $r$. Given that the chain converges to the stationary distribution, we know that, for each $\gamma> 0$,  there is a $r(\gamma)\geq 0$ such that, for all $r\geq r(\gamma)$, $|\mu_r[i] - \pi[i]|/\pi[i] \leq \gamma$, for all $i\in V$.
Then, for any initial distribution $\vec{\mu}_1$ and for all $r\geq r(\gamma)$, the following holds.
\begin{align*}
\frac{1-\gamma}{n} &\leq \mu_r[i] \leq \frac{1+\gamma}{n} \textrm{, for all $i\in V$,}\\
\frac{1-\gamma}{n} &\leq \sum_{j\in V}\mu_1[j]S^r[j][i] \leq \frac{1+\gamma}{n} \textrm{, for all $i\in V$,}\\
\frac{1-\gamma}{n} &\leq S^r[j][i] \leq \frac{1+\gamma}{n} \textrm{, for all $j,i\in V$.}
\end{align*}

Where $S^r[j][i]$ is the value of row $j$ and column $i$ of $S^r$.
Using the latter bounds, 
for any node $i\in V$ we get the following 
bounds for $\Phi_{r}[i] = \sum_{j\in V} \Phi_{1}[j] S^r[j][i]$ :
\begin{align*}
\Phi_{r}[i] &
\leq \sum_{j\in V} \Phi_{1}[j]  \frac{1+\gamma}{n} = (1+\gamma)\frac{||\vec\Phi_{1}||}{n} \textrm{, and}\\
\Phi_{r}[i] &
\geq \sum_{j\in V} \Phi_{1}[j]  \frac{1-\gamma}{n} = (1-\gamma)\frac{||\vec\Phi_{1}||}{n} \ .
\end{align*}

Thus, 
the claim follows.
\end{proof}


Now we bound the time for convergence to the stationary distribution, i.e. the \emph{mixing time}.

\begin{lemma}
\label{lemma:time}
Consider a \dist phase of \CMC where $k^{1+\epsilon}\geq |N_i|$ for all $i\in V$, \mm{and $0<\epsilon\leq1$}. 
For any $\gamma>0$,
if the number of rounds is at least $r  \geq (2/\phi^2)\log (n/\gamma)$, 
where $\phi$ is the conductance of the graph underlying ${\bf X}$,
the distribution $\vec\mu_{r}$ is such that, for any $i\in V$, it is
$$\frac{|\mu_{r}[i]-1/n|}{1/n} \leq \gamma \ .$$ 
\end{lemma}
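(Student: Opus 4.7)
The plan is to reduce this to the standard conductance-based mixing-time bound for reversible lazy Markov chains, and then verify that the chain $\mathbf{X}$ with transition matrix $S$ defined in the preceding paragraphs satisfies the hypotheses of that bound. The key facts about $S$ that I would establish at the start are: (i) $S$ is symmetric, since $s_{ij}=s_{ji}=1/(2k^{1+\epsilon})$ for $i\neq j$, so the chain is reversible with respect to the uniform stationary distribution $\vec\pi=(1/n,\dots,1/n)$ already identified in Lemma~\ref{lemma:correct}; (ii) every self-loop probability $s_{ii}=1-|N_i|/(2k^{1+\epsilon})\ge 1/2$ under the hypothesis $k^{1+\epsilon}\ge|N_i|$, so the chain is ``lazy'' and all eigenvalues of $S$ lie in $[0,1]$; (iii) the chain is irreducible and aperiodic (as already used in Lemma~\ref{lemma:correct}), so $1$ is a simple eigenvalue.

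Next I would invoke the Jerrum--Sinclair / Cheeger-type inequality: for a reversible chain with conductance $\phi$ (in the sense defined in Section~\ref{sec:model}, which for symmetric $S$ and uniform $\vec\pi$ coincides with the standard definition up to the normalization chosen here), the second largest eigenvalue satisfies $\lambda_2 \le 1-\phi^2/2$. Combined with laziness, this gives a bound on $\max\{|\lambda_i| : \lambda_i\neq 1\}$.

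I would then apply the standard spectral bound on pointwise convergence: for any $i,j\in V$ and any round $r$,
\begin{equation*}
\bigl| S^r[j][i] - 1/n \bigr| \le \sqrt{\pi_j/\pi_i}\,\lambda_2^r = \lambda_2^r,
\end{equation*}
since $\pi$ is uniform. Hence for any initial distribution $\vec\mu_1$,
\begin{equation*}
|\mu_r[i]-1/n| \le \sum_{j} \mu_1[j]\, \lambda_2^r \le \lambda_2^r \le (1-\phi^2/2)^r \le \exp(-r\phi^2/2).
\end{equation*}
Dividing by $1/n$ and demanding $n\exp(-r\phi^2/2)\le \gamma$ yields exactly the threshold $r\ge (2/\phi^2)\log(n/\gamma)$ claimed in the lemma.

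The only non-routine step is the appeal to the Cheeger-type inequality $\lambda_2\le 1-\phi^2/2$: its standard statement assumes a particular normalization of conductance, so the main care needed is to confirm that the definition of $\phi$ given in Section~\ref{sec:model} for this specific (symmetric, doubly-stochastic, uniform-stationary) $S$ matches the one in that inequality. Once that identification is in place, the rest is algebra. If one prefers to avoid spectral theory, an alternative is to cite the canonical Sinclair bound on the relaxation time, $\tau_{\mathrm{rel}}\le 2/\phi^2$, followed by the standard conversion $\tau_{\mathrm{mix}}(\epsilon)\le \tau_{\mathrm{rel}}\cdot\log(1/(\pi_{\min}\epsilon))$ with $\pi_{\min}=1/n$ and $\epsilon=\gamma/n$, which yields the same $(2/\phi^2)\log(n/\gamma)$ threshold up to constants absorbed into the analysis.
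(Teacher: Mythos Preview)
Your proposal is correct and follows essentially the same route as the paper: both arguments use laziness ($s_{ii}\ge 1/2$) to ensure nonnegative eigenvalues, apply the Sinclair--Jerrum spectral bound $|S^r[j][i]-\pi_i|\le \lambda_2^r/\pi_{\min}$ (which for uniform $\pi$ reduces to $|S^r[j][i]-1/n|\le \lambda_2^r$), invoke the Cheeger-type inequality $1-\lambda_2\ge \phi^2/2$ from~\cite{SJ:count}, and then solve $n\lambda_2^r\le\gamma$ for $r$. The paper cites Propositions~3.1--3.2 and Lemma~3.3 of~\cite{SJ:count} for exactly the two ingredients you identify, so your plan matches the paper's proof in both structure and the key external lemmas.
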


\begin{proof}
Customarily, we bound the mixing time as a function of the second eigenvalue of the transition matrix of the Markov chain ${\bf X}$: 
we have that (cf. Proposition 3.1 and 3.2 in~\cite{SJ:count})
$\max_{i,j \in V} \frac{|S^r[i][j]-\pi[j]|}{\pi[j]}$ is at most
\begin{align}
\frac{(\max\{|\lambda_i| : 1\leq i \leq N-1\})^r}{\min_{j\in V}\pi[j]} 
= \frac{\lambda_1^r}{\min_{j\in V}\pi[j]} 
\ . 
\label{eqeigen}
\end{align}
The latter is true because $s_{ii}\geq1/2$ for all $i\in V$, then ${\bf X}$ is aperiodic, and hence all eigenvalues are positive.

Using that $\pi[i]=1/n$ for all $i\in V$, we have that
$\max_{i,j \in V} n|S^r[i][j]-1/n| \leq n\lambda_1^r$. 
The inequality holds for any $i,j\in V$ and any initial distribution $\vec\mu_{1}$, thus we have that
\begin{align}
\forall i\in V : n|\mu_{r}[i]-1/n| \leq n\lambda_1^r. 
\label{abserr}
\end{align}

Bounding the right-hand side as $n\lambda_1^r \leq \gamma$ for any $\gamma>0$, we have that 
$r  \geq \log (n/\gamma)/\log (1/\lambda_1)$.
Using that $\log 1/\lambda_1 \geq 1-\lambda_1$ and that $1-\lambda_1 \geq \phi^2/2$ (cf. Lemma 3.3 in~\cite{SJ:count}), it is enough to have
$r  \geq (2/\phi^2) \log (n/\gamma)$.
Thus, the claim follows.
\end{proof}


We show now that for big enough $k^{1+\epsilon}$ no nodes have potential above some threshold.

\begin{lemma}
\label{lemma:nolowalarm}
If $k^{1+\epsilon} \geq 2n+1$, $r\geq (2/\phi^2) \log (k^{2(1+\epsilon)})$, $\ell\geq1$, \mm{and $0<\epsilon\leq1$}, at the end of the \dist phase no individual node has potential larger than $\tau=1-1/(k^{1+\epsilon}-1)$. 
\end{lemma}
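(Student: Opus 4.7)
The plan is to reduce this lemma to the two preceding convergence lemmas plus a short algebraic check at the end. First, I would verify that the \dist phase runs cleanly, in the sense that no node ever transitions to the $low$ state. Under the hypothesis $k^{1+\epsilon} \geq 2n+1$, every node satisfies $|N_i| \leq n-1 < k^{1+\epsilon}$, so the degree condition at Line~\ref{candAlg}.\ref{toomany} is trivially met. Since every node initializes $q \gets probing$ (Line~\ref{candAlg}.\ref{leaderinitp}), an easy induction on the round index shows that no node ever sees a neighbor with $q = low$, so the Markov-chain branch of the update (Line~\ref{candAlg}.\ref{candPot}) is executed every round by every node. Hence the potential vector evolves deterministically as $\vec\Phi_r = \vec\Phi_1 S^r$, with initial total $||\vec\Phi_1|| = n-\ell$ (each non-candidate contributes $1$, each candidate contributes $0$).

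Next, I would invoke Lemma~\ref{lemma:time} and Lemma~\ref{lemma:correct} with relative error $\gamma := n/k^{2(1+\epsilon)}$. The hypothesis $r \geq (2/\phi^2)\log(k^{2(1+\epsilon)}) = (2/\phi^2)\log(n/\gamma)$ precisely meets the mixing-time bound of Lemma~\ref{lemma:time}, so Lemma~\ref{lemma:correct} gives, for every $i\in V$,
\[
\Phi_r[i] \;\leq\; (1+\gamma)\,\frac{||\vec\Phi_1||}{n} \;=\; \left(1+\frac{n}{k^{2(1+\epsilon)}}\right)\frac{n-\ell}{n}.
\]

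It then remains to check algebraically that this bound does not exceed $\tau = 1 - 1/(k^{1+\epsilon}-1)$. Using $\ell \geq 1$, the displayed upper bound is at most $1 - 1/n + \gamma(n-1)/n$, so it suffices to show $\gamma(n-1)/n \leq 1/n - 1/(k^{1+\epsilon}-1)$. The hypothesis $k^{1+\epsilon} \geq 2n+1$ yields $1/(k^{1+\epsilon}-1) \leq 1/(2n)$, hence the right-hand side is at least $1/(2n)$; since $\gamma(n-1)/n < \gamma$ and $k^{2(1+\epsilon)} \geq (2n+1)^2 \geq 2n^2$, we get $\gamma \leq 1/(2n)$, which closes the inequality.

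The only genuinely delicate point is the first step: one has to ensure that the strict inequality $k^{1+\epsilon} \geq 2n+1$ is strong enough both to prevent any low-triggering branch of the code (so that $S$ governs the whole evolution) and to leave a $\Theta(1/n)$ margin between the converged potential $(n-\ell)/n$ and the threshold $\tau$ after absorbing the relative mixing error $\gamma$. Once this is secured, the rest is a direct application of the convergence results already established.
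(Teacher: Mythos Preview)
Your proposal is correct and follows essentially the same route as the paper: verify that the degree condition $|N_i|<k^{1+\epsilon}$ keeps every node in the $probing$ branch so the evolution is governed by $S$, invoke the mixing bound at the prescribed $r$, and then check algebraically that $(1+\gamma)(n-\ell)/n\le \tau$ using $\ell\ge 1$ and $k^{1+\epsilon}\ge 2n+1$. The only cosmetic difference is your choice of error parameter $\gamma=n/k^{2(1+\epsilon)}$ (which makes $\log(n/\gamma)=\log k^{2(1+\epsilon)}$ match the hypothesis on $r$ exactly), whereas the paper takes $\gamma=1/k^{1+\epsilon}$ and uses $k^{1+\epsilon}\ge n$ to absorb the slack; the resulting endgame inequalities are equivalent.
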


\begin{proof}


Given that $k^{1+\epsilon}\geq 2n+1 > n \geq |N_i|$ for all $i\in V$,
fixing $\gamma=1/k^{1+\epsilon}$ in Lemma~\ref{lemma:time} we have that 
for $r  \geq (2/\phi^2)\log (k^{2(1+\epsilon)}) \geq (2/\phi^2)\log (n k^{1+\epsilon})$
it is $|\mu_{r}[i]-1/n| \leq 1/(nk^{1+\epsilon})$
for all $i\in V$.
Thus, given that $||\vec\Phi_1|| = n-\ell$, after $r\geq (2/\phi^2) \log (k^{2(1+\epsilon)})$ rounds no node has potential larger than $(n-\ell)(1/n+1/(nk^{1+\epsilon}))$. 
We complete the proof showing that the latter is not larger than $\tau$. That is,
$(n-\ell)\left(\frac{1}{n}+\frac{1}{nk^{1+\epsilon}}\right) 
\leq 1-\frac{1}{k^{1+\epsilon}-1}$.

Given that $\ell\geq 1$, it is enough to prove
\begin{align*}
\frac{n-1}{n}+\frac{n-1}{nk^{1+\epsilon}} &\leq 1-\frac{1}{k^{1+\epsilon}-1}\\
\frac{1}{k^{1+\epsilon}-1}+\frac{1}{k^{1+\epsilon}} &\leq \frac{1}{n}.
\end{align*}

Using that $k^{1+\epsilon} \geq 2n+1$, the claim follows.
\end{proof}


The following lemma shows that for large enough $k$ the majority of the iterations in the \try phase do not have \cand nodes.

\begin{lemma}
\label{lemma:allenoughempties}
If $k^{1+\epsilon}\geq 2n+1$, $p(k)=\ln 2/k^{1+\epsilon}$, and $f(k) \geq c_1 \ln k^{1+\epsilon} / \zeta$ where $\zeta=\frac{1}{2\sqrt{2}} \left(1-\frac{1}{\sqrt{2}}\right)^2$, \mm{and $0<\epsilon\leq1$}, the number of iterations of \try phase in Algorithm~\ref{alg:unknownN} without \cand nodes detected is at least $f(k)/2$ with probability at least $1-1/n^{c_1}$, for any $c_1>0$.
\end{lemma}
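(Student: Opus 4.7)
The plan is to reduce the claim to a standard multiplicative Chernoff bound on a sum of independent Bernoullis, once we argue that each try-phase iteration can be treated as yielding a globally consistent ``empty'' flag.

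First, I would observe that under the hypothesis $k^{1+\epsilon} \geq 2n+1$ the diameter of $G$ is at most $n-1 < k^{1+\epsilon}$. Hence the $k^{1+\epsilon}$ rounds of the \diss phase of \CMC (see Algorithm~\ref{candAlg}) are enough to propagate the existence flag $c$ to every node. Thus, upon return from \CMC in iteration $i$, every node stores the same value $empty[i] = \lnot c$, and $empty[i]$ is true iff no node chose $color = \cand$ in the corresponding execution of Line~\ref{alg:unknownN}.\ref{candchoice}.

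Next, I would compute the probability $q$ that no node becomes a \cand in a given iteration. Since color choices are independent across nodes, $q = (1 - p(k))^n$. Using the inequality $1 - x \geq \exp(-x/(1-x))$ for $x \in (0,1)$, with $p(k) = \ln 2/k^{1+\epsilon}$ and the hypothesis $k^{1+\epsilon} \geq 2n+1$ (which gives $n/(k^{1+\epsilon} - \ln 2) \leq 1/2$), this yields
\begin{align*}
q \;\geq\; \exp\!\left(-\frac{n \ln 2}{k^{1+\epsilon} - \ln 2}\right) \;\geq\; \exp(-\tfrac{1}{2}\ln 2) \;=\; \frac{1}{\sqrt{2}}.
\end{align*}
The color choices in Line~\ref{alg:unknownN}.\ref{candchoice} are drawn independently across the $f(k)$ iterations, so the per-iteration empty indicators $Y_i \in \{0,1\}$ are independent.

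Finally, let $Y = \sum_{i=1}^{f(k)} Y_i$. Since each $Y_i$ is Bernoulli with success probability at least $1/\sqrt{2}$, $Y$ stochastically dominates $Z \sim \mathrm{Binomial}(f(k), 1/\sqrt{2})$, which has mean $\mu_Z = f(k)/\sqrt{2}$. Choosing $\delta = 1 - 1/\sqrt{2}$ so that $(1-\delta)\mu_Z = f(k)/2$, the multiplicative Chernoff lower-tail bound gives
\begin{align*}
\Pr[Y < f(k)/2] \;\leq\; \Pr[Z < f(k)/2] \;\leq\; \exp\!\left(-\frac{\delta^2 \mu_Z}{2}\right) \;=\; \exp\!\left(-\zeta f(k)\right),
\end{align*}
where $\zeta = (1 - 1/\sqrt{2})^2/(2\sqrt{2})$, exactly matching the constant in the statement. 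Substituting $f(k) \geq c_1 \ln k^{1+\epsilon}/\zeta$ and using $k^{1+\epsilon} > n$ yields $\exp(-\zeta f(k)) \leq k^{-c_1(1+\epsilon)} \leq n^{-c_1}$, as desired.

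The hard part, modest as it is, will be bookkeeping the constants so that the somewhat exotic $\zeta$ of the statement appears naturally at the end: both the lower bound $1/\sqrt{2}$ on $q$ and the deviation $\delta = 1 - 1/\sqrt{2}$ are tightly pinned down by the hypothesis $k^{1+\epsilon} \geq 2n+1$ and the target threshold $f(k)/2$, so the computation essentially writes itself once the reduction to a binomial lower-tail Chernoff bound is in place.
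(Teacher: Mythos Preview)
Your proof is correct and follows essentially the same approach as the paper: both reduce to independent Bernoulli indicators for ``no \cand node in iteration $i$'', lower-bound the per-iteration probability by $1/\sqrt{2}$ using $k^{1+\epsilon}\geq 2n+1$, and then apply a multiplicative Chernoff lower-tail bound to land on $\exp(-\zeta f(k))$. Your explicit stochastic-domination step to $\mathrm{Binomial}(f(k),1/\sqrt{2})$ is a slightly cleaner variant of the paper's direct Chernoff application with $\delta=1-1/(2(1-p(k))^n)$, but the argument is otherwise the same.
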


\begin{proof}
For any iteration of the \try phase, if there is a \cand node in the network, all the other nodes will detect its presence due to connectivity and the number of rounds of communication, which is more than $k^{1+\epsilon}>2n+1$ (cf. Line~\ref{candAlg}.\ref{dissloop}). 
Therefore, to prove the claim, it is enough to prove that the number of iterations without \cand nodes is at least $f(k)/2$ with the desired probability. 

The probability of not having \cand nodes in an iteration of the \try phase is 
$(1-p(k))^n$.
Let the random variable $X$ be the number of iterations without \cand nodes. 
The expected number of iterations without \cand nodes is 
$E(X) = f(k)(1-p(k))^n$.
The random choices among iterations are independent. Thus, by Chernoff bound~\cite{book:mitzenmacher} we have that 
$Pr(X\leq (1-\delta)E(X)) \leq \exp\left(- \delta^2 E(X) / 2 \right)$, for $0<\delta<1$.
Making $\delta=1-1/(2(1-p(k))^n)$, 
which is positive for $p(k)=\ln 2/k^{1+\epsilon}$ and $k^{1+\epsilon}>2n+1$, we have that  
\begin{align*}
Pr(X\leq f(k)/2) \leq \exp\left(- \frac{1}{2} \left(1-\frac{1}{2\left(1-p(k)\right)^n}\right)^2  f(k)\left(1-p(k)\right)^n \right).
\end{align*}

Replacing that $\left(1-p(k)\right)^n\geq 1/\sqrt{2}$ for $k^{1+\epsilon}>2n+1$ and $p(k)=\ln 2/k^{1+\epsilon}$,
\begin{align*}
Pr(X\leq f(k)/2) 
\leq \exp\left(- \frac{1}{2\sqrt{2}} \left(1-\frac{1}{\sqrt{2}}\right)^2 f(k) \right). 
\end{align*}

Then, to prove the claim it is enough to prove that 
\begin{align*}
e^{- \zeta f(k)}
\leq \frac{1}{n^{c_1}}, \text{ for } c_1>0 \text{ and } \zeta=\frac{1}{2\sqrt{2}} \left(1-\frac{1}{\sqrt{2}}\right)^2.
\end{align*}
Using that $k^{1+\epsilon}\geq 2n+1$, the latter is true for 
$f(k) \geq c_1 \ln k^{1+\epsilon}/\zeta$.
Hence, the claim follows.
\end{proof}


We now show that some nodes do not choose ID until $k$ is large enough.

\begin{lemma}
\label{lemma:somechooselater}
If $p(k)=\ln 2/k^{1+\epsilon}$, 
\mm{$r(k) \geq k^{1+\epsilon}\log (2k) $} 
and $f(k) = c_1 \ln (k^{1+\epsilon}/\xi) / \zeta$ where $\zeta=\frac{1}{2\sqrt{2}} \left(1-\frac{1}{\sqrt{2}}\right)^2$, 
\mm{$c_1>2\zeta/\ln(4/e)$,}
$0<\xi<1$, \mm{and $0<\epsilon\leq1$} 
there exist nodes that do not choose ID until 
\mm{$k^{1+\epsilon}\log(4k) \geq n$} 
with probability at least $1-\xi$.
\end{lemma}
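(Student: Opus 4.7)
The plan is to show that, in each iteration $k$ with $k^{1+\epsilon}\log(4k) < n$, some node $v^*$ fails the first conjunct of the ID-choice condition in Line~\ref{alg:unknownN}.\ref{idchoicedecision} (i.e., $|\{i : empty[i] = true\}| \leq f(k)/2$) with high probability, and then to take a union bound over the $O(\log n)$ relevant values of $k$.

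First I would locate a node $v^*$ with $|B_{v^*}(k^{1+\epsilon})| \geq 2k^{1+\epsilon}+1$. Such a node exists in any connected graph on $n > 2k^{1+\epsilon}+1$ vertices (a condition implied by $n > k^{1+\epsilon}\log(4k)$ for the relevant $k$) by a short case analysis on the radius $R$ and diameter $D$ of $G$: if $R \leq k^{1+\epsilon}$ the center satisfies $B(k^{1+\epsilon}) = V$; if $D \geq 2k^{1+\epsilon}$ the midpoint of a shortest diameter-realizing path covers both of its halves of length $k^{1+\epsilon}$; and the remaining subcase $k^{1+\epsilon} < D < 2k^{1+\epsilon}$ is handled by observing that the more than $2k^{1+\epsilon}+1$ vertices of $G$ must lie within $k^{1+\epsilon}$ hops of some suitable vertex on the diameter path. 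Since the \diss phase of \CMC runs for $k^{1+\epsilon}$ rounds and each round propagates the $c$ flag by one hop, at the end of \try iteration $i$ node $v^*$ has $empty[i] = true$ iff no vertex in $B_{v^*}(k^{1+\epsilon})$ was selected \cand in Line~\ref{alg:unknownN}.\ref{candchoice}. With $p(k) = \ln 2 / k^{1+\epsilon}$, this yields
\begin{align*}
\Pr[empty[i] = true] = (1-p(k))^{|B_{v^*}(k^{1+\epsilon})|} \leq \bigl(1-\tfrac{\ln 2}{k^{1+\epsilon}}\bigr)^{2k^{1+\epsilon}+1} \leq \tfrac{1}{4}.
\end{align*}

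Since the $f(k)$ iterations of \try use independent random choices, letting $X$ denote the number of them with $empty[i] = true$, the variable $X$ is stochastically dominated by $\mathrm{Bin}(f(k),1/4)$, and the multiplicative Chernoff upper tail gives
\begin{align*}
\Pr[X \geq f(k)/2] \leq (e/4)^{f(k)/4} = \exp(-f(k)\ln(4/e)/4).
\end{align*}
Substituting $f(k) = c_1 \ln(k^{1+\epsilon}/\xi)/\zeta$, this per-iteration failure probability becomes at most $(\xi/k^{1+\epsilon})^{c_1 \ln(4/e)/(4\zeta)}$, and the hypothesis $c_1 > 2\zeta/\ln(4/e)$ makes the exponent strictly exceed $1/2$. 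A geometric-series bound over the iterations $k = 2, 4, 8, \dots$ with $k^{1+\epsilon}\log(4k) < n$ then keeps the total failure probability at most $\xi$. On the complementary event, at every such iteration $v^*$ has $X < f(k)/2$, the ID-choice condition fails at $v^*$, and $v^*$ retains $id = nil$, which is exactly the conclusion of the lemma.

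The hardest part will be the graph-theoretic step of exhibiting $v^*$, since the path graph is tight---interior vertices achieve exactly $|B_v(k^{1+\epsilon})| = 2k^{1+\epsilon}+1$---so the radius/diameter case analysis must be carried through carefully. The remainder is a routine Chernoff calculation combined with a geometric-series union bound, with the specific constant $c_1 > 2\zeta/\ln(4/e)$ chosen precisely to balance the Chernoff exponent against the growth of the union bound across iterations.
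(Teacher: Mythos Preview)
Your plan is reasonable in outline, but the final union-bound step does not close under the stated hypothesis $c_1 > 2\zeta/\ln(4/e)$. From $\Pr[empty[i]=true]\le 1/4$ you obtain $\Pr[X\ge f(k)/2]\le (e/4)^{f(k)/4}=(\xi/k^{1+\epsilon})^{\alpha}$ with $\alpha=c_1\ln(4/e)/(4\zeta)$, and the hypothesis only guarantees $\alpha>1/2$. The sum $\sum_{i\ge 1}(\xi/2^{i(1+\epsilon)})^{\alpha}$ is then of order $\xi^{\alpha}$, which \emph{exceeds} $\xi$ for small $\xi$ whenever $\alpha<1$; so the claim ``total failure probability at most $\xi$'' is unjustified. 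Since $c_1$ is fixed by the lemma you cannot simply strengthen it, and this factor-of-two shortfall in the Chernoff exponent is precisely what breaks your argument.

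The paper closes the gap by exploiting a much larger detection radius, $h=r(k)+k^{1+\epsilon}\ge k^{1+\epsilon}\log(4k)$, together with the elementary fact that in a connected graph \emph{every} vertex $v$ satisfies $|N_v^{(h)}|\ge\min(n,h)$. This gives $\Pr[v\text{ sees no white node}]\le(1-p(k))^{k^{1+\epsilon}\log(4k)}\le 1/(4k)$ uniformly in $v$, so the Chernoff tail becomes $(2eq)^{f(k)/2}$ with $q\le 1/(4k)$; after substitution the exponent on $\xi$ is $c_1\ln(4/e)/(2\zeta)>1$, and the additional $1/k$ decay controls the sum over $k$. The paper also differs structurally: rather than isolating one $v^\ast$ that must work simultaneously for every relevant $k$ (a requirement your radius/diameter sketch does not yet secure), it uses the uniform-in-$v$ tail bound, sets $Y=\sum_v \mathbf{1}\{Y_v=1\}$, and finishes with Markov's inequality, $\Pr[Y\ge n-1]<2(\xi/2)^{>1}<\xi$, concluding that at least two nodes keep $id=nil$ through all small-$k$ iterations.
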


\begin{proof}
For any node $v\in V$, let $N_v^{(h)}$ be the set of nodes within $h$ hops of $v$, including $v$, and let $n_v^{(h)}=|N_v^{(h)}|$.
Let $\cE_v$ be the event that node $v$ does not detect \cand nodes during an iteration of the \try phase.
We have that for each $k=2,4,\dots$ and each node $v\in V$ it is $Pr(\cE_v) = (1-p(k))^n + (1-p(k))^{n_v^{(h)}} (1-(1-p(k))^{n-n_v^{(h)}}) = (1-p(k))^{n_v^{(h)}}$, for $h = r(k)+k^{1+\epsilon}$. 

Let $X_v$ be the number of times when $\cE_v$ occurs over the $f(k)$ iterations of the \try phase. 
Then, 
$\mathbf{E}(X_v) = f(k) (1-p(k))^{n_v^{(h)}} $.
These iterations are independent. Hence, by Chernoff bound~\cite{book:mitzenmacher}, for $\delta >0$, we have that 
$Pr(X_v\geq (1+\delta) \mathbf{E}(X_v))$ is smaller than
\begin{align*}
\left(\frac{1}{e}\right)^{\mathbf{E}(X_v)}\left( \frac{e}{1+\delta}\right)^{(1+\delta)\mathbf{E}(X_v)} \ .
\end{align*}
For $1+\delta = 1/(2(1-p(k))^{n_v^{(h)}})$, which implies $\delta>0$ because $(1-p(k))^{n_v^{(h)}} < (1-p(k))^{k^{1+\epsilon}} \leq 1/2$ for $p(k)=\ln 2/k^{1+\epsilon}$, it is
\begin{align*}
Pr(X_v\geq f(k)/2) 
< \left( 2e(1-p(k))^{n_v^{(h)}}\right)^{f(k)/2}.
\end{align*}
Given that 
\mm{$n_v^{(h)}\geq h=r(k)+ k^{1+\epsilon} \geq k^{1+\epsilon}\log(2k) + k^{1+\epsilon} = k^{1+\epsilon}\log(4k)$,} 
we have that 
\mm{
\begin{align*}
Pr(X_v\geq f(k)/2) 
< \left( 2e(1-p(k))^{\min\{n,k^{1+\epsilon}\log(4k)\}}\right)^{f(k)/2}.\label{eq:pr}
\end{align*}
}
For 
\mm{$k^{1+\epsilon}\log(4k) < n$} 
\mm{and $\epsilon\leq 1$}, 
replacing $p(k)=\ln 2/k^{1+\epsilon}$ we have 
\mm{
\begin{align*}
Pr(X_v\geq f(k)/2) 
&< 
\left( \frac{e}{4k}\right)^{f(k)/2}.
\end{align*}
}
Let $Y_v$ be a random variable indicating that $X_v\geq f(k)/2$ for some \try phase with estimate $k$ such that 
\mm{$k^{1+\epsilon}\log(4k) < n$.} 
Then, it is 
\mm{
\begin{align*}
Pr(Y_v=1) 
&< 
\sum_{i=1}^\alpha \left( \frac{e}{2^{i+2}}\right)^{f(2^i)/2}.
\end{align*}
}
where $\alpha$ is such that 
\mm{$2^{\alpha(1+\epsilon)}\log(2^{\alpha+2}) < n$} 
and 
\mm{$2^{(\alpha+1)(1+\epsilon)}\log(2^{\alpha+3}) \geq n$.}
%
\mm{We also have that
\begin{align*}
\sum_{i=1}^\alpha \left( \frac{e}{2^{i+2}}\right)^{f(2^i)/2} 
< \sum_{i=1}^\alpha \left( \frac{e}{2^{i+2}}\right)^{f(2)/2}
=  \left(\frac{e}{4}\right)^{f(2)/2}  \sum_{i=1}^\alpha \left(\frac{1}{2^{i}}\right)^{f(2)/2}
&<  \left(\frac{e}{4}\right)^{f(2)/2}. 
\end{align*}
}
Replacing $f(k) = c_1 \ln (k^{1+\epsilon}/\xi) / \zeta$ we get
\mm{
\begin{align*}
Pr(Y_v=1) 
&< \left( \frac{e}{4}\right)^{c_1 \ln (2/\xi) / (2\zeta)}
= \left( \frac{\xi}{2}\right)^{c_1 \ln (4/e) / (2\zeta)}.
\end{align*}
}
Let $Y=\sum_{v\in V} Y_v$. Then, the expected number of nodes that choose ID while 
\mm{$k^{1+\epsilon}\log(4k) < n$} 
is $\mathbf{E}(Y) = n Pr(Y_v=1)$.
By Markov's inequality~\cite{book:mitzenmacher}, we have that 
\mm{
$Pr(Y\geq n-1) < \mathbf{E}(Y)/(n-1)< 2 \left( \xi/2\right)^{c_1 \ln (4/e) / (2\zeta)}<\xi$, for $c_1>2\zeta/\ln(4/e)$.
}
Thus, the claim follows.
\end{proof}


Finally, we show that when $k$ is large enough so that the majority of the iterations in the \try phase do not have \cand nodes, there is still some iteration with \cand nodes. 

\begin{lemma}
\label{lemma:somenonempty}
If $2n+1\leq k^{1+\epsilon} \leq 4n$, 
\mm{$r(k) \geq k^{1+\epsilon}$,} 
$p(k)=\ln 2/k^{1+\epsilon}$, and $f(k) = c_1 \ln (k^{1+\epsilon}/\xi) / \zeta$ where $\zeta=\frac{1}{2\sqrt{2}} \left(1-\frac{1}{\sqrt{2}}\right)^2$, $c_1>0$, $0<\xi<1$, \mm{and $0<\epsilon\leq1$,} there exist some iteration of the \try phase in Algorithm~\ref{alg:unknownN} with \cand nodes detected with probability at least $1-\xi$, for $0<\xi<1$.
\end{lemma}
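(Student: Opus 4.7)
The plan is to decompose the event ``some iteration of the Try phase detects a candidate'' into two sub-events: (i) in at least one of the $f(k)$ iterations, at least one node becomes a candidate, and (ii) within any such iteration, the flag $c=true$ propagates to every node of the network. Event (ii) will hold deterministically under the hypotheses, while (i) is a straightforward probability computation over independent iterations. I would then combine both to conclude.

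For (ii), I would note that the Diss phase of Algorithm~\ref{candAlg} runs for $k^{1+\epsilon}$ rounds, and by hypothesis $k^{1+\epsilon}\geq 2n+1$ strictly exceeds the diameter of every connected $n$-node graph. Since $c$ is initialized to $true$ at each candidate (Line~\ref{candAlg}.\ref{leaderinitp}) and then updated by an or-rule along every edge inside the Diss loop (Line~\ref{candAlg}.\ref{dissloop}), the value $true$ floods the entire network, so every iteration containing at least one candidate is ``detected'' at every node. For (i), each node independently becomes a candidate with probability $p(k)=\ln 2/k^{1+\epsilon}$, and because the $f(k)$ iterations use fresh randomness they are mutually independent; hence the probability that no iteration produces any candidate is
\[
\bigl((1-p(k))^n\bigr)^{f(k)} \leq e^{-n p(k) f(k)}.
\]
I would then exploit the hypothesis $k^{1+\epsilon}\leq 4n$ to obtain $n p(k)\geq (\ln 2)/4$, so this bound becomes $2^{-f(k)/4}$. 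Substituting $f(k)=c_1\ln(k^{1+\epsilon}/\xi)/\zeta$ gives $(k^{1+\epsilon}/\xi)^{-c_1\ln 2/(4\zeta)}$, which is at most $\xi$ once $c_1\ln 2/(4\zeta)\geq 1$.

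The main obstacle is bookkeeping on constants: the stated hypothesis ``$c_1>0$'' is too permissive and should be read as ``$c_1$ large enough so that $c_1\ln 2/(4\zeta)\geq 1$''. With $\zeta=(1-1/\sqrt{2})^2/(2\sqrt{2})$ this requires $c_1$ to exceed a specific small absolute constant, which is consistent with the instantiation used in Theorem~\ref{thm:unknownN}. The rest of the argument is elementary and mirrors the approach of Lemma~\ref{lemma:allenoughempties}, with the roles of ``empty'' and ``non-empty'' iterations exchanged: instead of lower-bounding $(1-p(k))^n$ to guarantee many empty iterations, I upper-bound it to guarantee at least one non-empty one.
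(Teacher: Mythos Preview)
Your proof is correct and follows essentially the same argument as the paper: both observe that once a white node exists in an iteration its presence is detected everywhere (because the broadcast runs for at least $k^{1+\epsilon}\ge 2n+1$ rounds, exceeding the diameter), and then bound the probability that no iteration produces a white node by $(1-p(k))^{nf(k)}\le e^{-np(k)f(k)}$, substituting $n\ge k^{1+\epsilon}/4$ to obtain $(\xi/k^{1+\epsilon})^{c_1\ln 2/(4\zeta)}$. Your remark that the hypothesis ``$c_1>0$'' is too weak and should read $c_1\ge 4\zeta/\ln 2$ is well taken; the paper's own proof tacitly relies on the exponent being at least~$1$, and the instantiation $c_1=1$ used in Theorem~\ref{thm:unknownN} does satisfy this.
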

\begin{proof}
\mm{Given that $r(k)\geq k^{1+\epsilon} \geq 2n+1$,} 
all nodes are within reach of each other
\mm{in every iteration of the \try phase.} 
Then, the probability that no \cand nodes are detected in \emph{all} iterations of the \try phase is at most $(1-p(k))^{nf(k)}\leq \exp(-p(k)nf(k))$. Replacing $p(k)$, $f(k)$, and $n$ the probability is at most 
\begin{align*}
\exp\left(- \frac{\ln 2}{k^{1+\epsilon}} \cdot \frac{k^{1+\epsilon}}{4} \cdot \frac{c_1}{\zeta} \ln \frac{k^{1+\epsilon}}{\xi} \right) 
\leq
\left( \frac{\xi}{k^{1+\epsilon}} \right)^{\frac{c_1\ln 2}{2\zeta}} < \xi. 
\end{align*}
\end{proof}


We can now prove 
the main theorem of this section as follows.

\paragraph{Proof of Theorem~\ref{thm:unknownN}}

\begin{proof}
We verify first that the functions $r(k)$, $p(k)$, $\tau(k)$, and $f(k)$ defined meet the conditions of the lemmas used. 
Notice that $\tau(k)$ and $p(k)$ are as defined in Lemmas~\ref{lemma:nolowalarm},~\ref{lemma:allenoughempties}, and~\ref{lemma:somechooselater}. 
\mm{
Using that the fraction of potential shared in the diffusion process in Algorithm~\ref{candAlg} is $1/(2k^{1+\epsilon})$,
the isoperimetric number of the network graph $G$ 
is
$i(G) = \phi 2k^{1+\epsilon}$ (cf. definitions of conductance and isoperimetric number in Section~\ref{sec:model}).
Thus,
 $r(k)=(8k^{2(1+\epsilon)}/i(G)^2) \log (k^{2(1+\epsilon)}) + k^{1+\epsilon}\log (2k)$ fulfills the condition $r \geq (2/\phi^2) \log (k^{2(1+\epsilon)})$ of Lemma~\ref{lemma:nolowalarm},
as well as 
$r(k) \geq k^{1+\epsilon}\log (2k)$ 
in Lemma~\ref{lemma:somechooselater}
and 
$r(k) \geq k^{1+\epsilon}$ 
in Lemma~\ref{lemma:somenonempty}.}
Finally, 
for $\zeta=\frac{1}{2\sqrt{2}} \left(1-\frac{1}{\sqrt{2}}\right)^2$,
it is $f(k) = \ln (k^{1+\epsilon}/\xi) / \zeta
\geq c_1 \ln k^{1+\epsilon} /\zeta$
for $c_1=1>0$ and $\xi<1$, as required by Lemmas~\ref{lemma:allenoughempties},~\ref{lemma:somechooselater}, and~\ref{lemma:somenonempty}.

To prove correctness, we observe first that, by Lemma~\ref{lemma:somechooselater}, with probability at least $1-\xi$ there exist some nodes that choose ID after \mm{$k^{1+\epsilon}\log(4k)\geq n$.} Hence, those nodes choose ID uniformly at random in the range $1$ to at least \mm{$k^{4(1+\epsilon)}\log^4(4k) \geq n^4$} (cf. Line~\ref{alg:unknownN}.\ref{idchoice}), which guarantees that their ID's are unique (for the same $k$) with probability at least $1-1/(2n^2)$~\cite{luby:mis}.
Also, when $k^{1+\epsilon}\geq 2n+1$, by Lemma~\ref{lemma:allenoughempties} we know that all nodes have not detected \cand nodes in at least $f(k)/2$ iterations of the \try phase with probability at least $1-1/n$, and by Lemma~\ref{lemma:somenonempty} we know that when $2n+1\leq k^{1+\epsilon} \leq 4n$ there are some iterations with \cand nodes with probability at least $1-\xi$. Hence, by Lemma~\ref{lemma:nolowalarm} we know that no node has potential larger $\tau(k)$. Combining these facts, we know that not all nodes choose ID before \mm{$k^{1+\epsilon}\log(4k)\geq n$} and all nodes have chosen ID after $k^{1+\epsilon} > 4n$, and the variables $id_{ldr}$ and $K_{ldr}$ will be updated in Line~\ref{alg:unknownN}.\ref{update} with the same values for all nodes within the next iteration (Line~\ref{alg:unknownN}.\ref{iter}). Given that the minimum ID of the maximum $K$ is taken (Line~\ref{alg:unknownN}.\ref{idchoicedecision}), the elected node is unique and the \eLE problem is solved with probability at least $1-1/(2n^2)-1/n-2\xi\geq 1-1/n^{\log(8/5)}-2\xi$ for $n>1$.  

We compute the asymptotic running time by inspection of Algorithms~\ref{alg:unknownN} and~\ref{candAlg}. For each of the values of $k=2,4,\dots, 2\lceil\lceil (4n)^{1/(1+\epsilon)}\rceil\rceil$, and for each of the $f(k)$ iterations of the \try phase in Algorithm~\ref{alg:unknownN}, Algorithm~\ref{candAlg} is executed. The \dist phase in Algorithm~\ref{candAlg} takes $r(k)$ iterations. Each iteration $i=1,\dots,r(k)$ takes $i\log (2k^{1+\epsilon})$ rounds of communication because potentials are divided by $2k^{1+\epsilon}$ in each iteration and potentials are transmitted bit by bit to fulfill the restrictions of the \congest model. (The other values transmitted all have $O(\log n)$ bits.) The \dist phase is followed by the \diss phase that takes $k^{1+\epsilon}$ rounds of communication. Thus, overall, the \eLE problem is solved within time
\begin{align*}
\sum_{i=1}^{a} f(2^i) \left(
\log(2^{i(1+\epsilon)+1})\sum_{j=1}^{r(2^i)} j 
+2^{i(1+\epsilon)}\right)
&= \sum_{i=1}^{a} f(2^i) \left(
(i(1+\epsilon)+1) \frac{r(2^i)(r(2^i)+1)}{2} 
+2^{i(1+\epsilon)}\right),
\end{align*}
where 
$a=1+\left\lceil\frac{2+\log n}{1+\epsilon}\right\rceil$,
\mm{$r(2^i) = 2^{2i(1+\epsilon)} \frac{16i(1+\epsilon)}{i(G)^2} + (i + 1)2^{i(1+\epsilon)}$,}
and
$f(2^i) = \frac{4\sqrt{2}}{\left(\sqrt{2}-1\right)^2} \ln (2^{i(1+\epsilon)}/\xi) = \frac{4\sqrt{2}(1+\epsilon)i}{\xi\log e\left(\sqrt{2}-1\right)^2}$.

Replacing, the claimed time complexity follows. Taking into account that every node communicates with all neighbors in each round, the number of messages per round is in $O(m)$. Hence, the claimed message complexity follows. 
\end{proof}


\section{Conclusions and Open Problems}

Motivated by recent developments for massive Ad-hoc Networks embedded in the Internet of Things, 
we have studied randomized \LE in \ANs in two scenarios: with and without knowledge of network size.  
We presented and analyzed randomized protocols for \sLE and \eLE.
To the best of our knowledge ours is the first study of \eLE protocols, whereas our \sLE protocol improves over previous work in message complexity. 
Improving the polynomial complexities of the solution for \eLE problem, and filling in the missing fields 
in Table~\ref{table:relwork}, are interesting and challenging future directions.

\bibliographystyle{plainurl}
\bibliography{Comprehensive_2010}

\end{document}